\theoremstyle{plain}
\newtheorem{theorem}{Theorem}
\newtheorem*{theorem*}{Theorem}
\newtheorem{corollary}[theorem]{Corollary}
\newtheorem{lemma}[theorem]{Lemma}
\newtheorem{proposition}[theorem]{Proposition}
\theoremstyle{definition}
\newtheorem*{problem*}{Problem}
\theoremstyle{remark}
\newtheorem*{remark*}{Remark}
\newcommand*{\sepsign}{\;{\footnotesize $\blacklozenge$}\; }
\DeclareMathOperator{\dcup}{\dot{\cup}}
\DeclareMathOperator{\neighb}{N}	
\newcommand*{\R}{\mathbf{R}}
\newcommand*{\Z}{\mathbf{Z}}
\newcommand*{\mS}{\mathcal{S}}
\newcommand*{\mC}{\mathcal{C}}
\renewcommand*{\bf}[1]{\textbf{#1}}
\newcommand*{\formatmathnames}[1]{\textnormal{\small #1}}
\newcommand*{\vrpm}{\formatmathnames{V-RAP}}  
\newcommand*{\uvrpm}{\formatmathnames{card-V-RAP}} 
\newcommand*{\rap}{\formatmathnames{E-RAP}} 
\newcommand*{\urap}{\formatmathnames{card-E-RAP}} 
\newcommand*{\vcthree}{\formatmathnames{VC\,3}}
\newcommand*{\setcover}{\formatmathnames{SC}}
\newcommand*{\p}{\formatmathnames{P}}
\newcommand*{\np}{\formatmathnames{NP}}
\newcommand*{\dtime}{{\formatmathnames{DTIME}}}
\newcommand*{\ptas}{\formatmathnames{PTAS}}
\newcommand*{\lp}{\formatmathnames{LP}}
\newcommand*{\ilp}{\formatmathnames{ILP}}
\newcommand*{\opt}{\formatmathnames{OPT}}
\newcommand*{\uset}{\mathcal{F}} 
\newcommand*{\rset}{X} 
\newcommand*{\inst}{\mathcal{I}} 
\newcommand*{\fset}{f} 
\author{
  \texorpdfstring		
  {\large%
    David Adjiashvili,$^{\textrm{a}}$
    \quad 
    Viktor Bindewald,$^{\textrm{b}}$
    \quad
    Dennis Michaels$^{\textrm{b}}$}
  {David Adjiashvili, Viktor Bindewald, Dennis Michaels}
}
\title{\Large%
  Robust Assignments with Vulnerable Nodes
}
\date{\normalsize%
version from: \today\\
\text{} 
}
\begin{document} 
    
\maketitle  

\begin{abstract}
Various real-life planning problems require making upfront decisions
before all parameters of the problem have been disclosed. An important
special case of such problem especially arises in scheduling and staff
rostering problems, where a set of tasks needs to be assigned to an available
set of resources (personnel or machines), in a way that each task is
assigned to one resource, while no task is allowed to share a resource
with another task. In its nominal form, the resulting computational 
problem reduces to the well-known assignment problem that can be
modeled as matching problems on bipartite graphs. 

In recent work \cite{adjiashvili_bindewald_michaels_icalp2016}, a new
robust model for the assignment problem was introduced that can
deal with situations in which certain resources, i.e.\ nodes or edges of
the underlying bipartite graph, are vulnerable and may become
unavailable after a solution has been chosen. 
In the original version from \cite{adjiashvili_bindewald_michaels_icalp2016} the resources subject to uncertainty are the
edges of the underlying bipartite graph.  

In this follow-up work, we complement our previous study 
by considering nodes as being vulnerable, instead of edges. 
The goal is now to choose a minimum-cost collection of nodes such
that, if any vulnerable node becomes unavailable, the remaining
part of the solution still contains sufficient nodes to perform all
tasks.
From a practical point of view, such type of unavailability is
interesting as it is typically caused e.g.\ by an employee's
sickness, or machine failure. 
We present algorithms and hardness of approximation results for
several variants of the problem.

\end{abstract}

\textbf{Keywords.\quad}
assignment problems \sepsign structural
robustness \sepsign
combinatorial optimization \sepsign robust optimization \sepsign approximation algorithms

\section{Introduction}
\label{sec:intro}

Input data of optimization problems is often uncertain in practice. 
A classical approach for dealing with uncertainty is robust optimization,
an approach that seeks solutions performing well in \emph{worst case} 
realizations of the uncertain parameters of the feasible set and the
cost structure.

In our former work \cite{adjiashvili_bindewald_michaels_icalp2016} a new 
version of Robust Assignment Problems was introduced and 
studied\footnote{A full version of this conference contribution 
	including all proofs is given in the preprint~\cite{icalp_full_version_16}.}. 
The robust version studied therein is a bulk-robust problem. This
robustness framework was introduced in Adjiashvili et
al.~\cite{AdjishviliStillerZenklusen2015}, and follows the general
idea of redundancy-based robustness.   
Roughly speaking, bulk-robustness deals with combinatorial
optimization problems in which some resources (e.g.\ edges or nodes in
graphs) may become unavailable. 
This type of uncertainty is modelled
by listing all subsets of resources that can simultaneously become
unavailable. 
The union of all these subsets form the set of so-called
\emph{vulnerable} resources. 
Each member of the list then defines a
scenario that leads to a deletion of the corresponding vulnerable
resources.  In this setting, a \emph{robust solution}
is a subset of resources that provides a feasible solution to the
underlying nominal optimization problem in every scenario, and the
task is to determine a robust solution of minimum cost. 

For the Robust Assignment Problem considered in
\cite{adjiashvili_bindewald_michaels_icalp2016} the nominal
optimization problem is given by the bipartite perfect matching
problem while vulnerable resources are edges. 
We call this problem \emph{Edge-Robust Assignment
	Problem} (\rap).
The formal definition of \rap\  is the following.
\begin{problem*}[The Edge-Robust Assignment Problem (\rap)]
	\text{}  
	\begin{itemize}
		\item \underline{Input:} 
		Tuple $(G,\uset,c)$, where $G := (U \dcup W,E)$ is a balanced,
		bipartite graph, i.e.\  $|U|=|W|$, $\uset \subseteq 2^E$ is a family
		of sets of vulnerable edges, and $c\in\R_{\geq 0}^E$ is a non-negative
		cost vector.  
		\item 
		\underline{Output:} If existent, an optimal solution to
		\begin{align}
		\tag{\rap}
		\begin{array}{ll}
		\min & c(\rset)  \\ 
		\textnormal{s.t.} & 
		\forall F\in \uset: 
		\rset \setminus F \text{ contains a perfect matching
			in } G\\
		& \rset \subseteq E .
		\end{array}
		\end{align}
	\end{itemize}
\end{problem*}

The focus in~\cite{adjiashvili_bindewald_michaels_icalp2016} is on the case
where $|F| = 1$ for all $F\in \uset$.
In this follow-up work, we complement our analysis for Robust Assignment Problems
by now considering nodes as vulnerable resources. 
For this, we again consider the nominal version of the problem to be the
matching problem on a bipartite graph $G=(U\dcup W,E)$, in which the goal
is to match all nodes in one side of the bipartition, say $U$. In other 
words, the nominal goal is to find a $U$-perfect matching.
Also, we assume that all vulnerable nodes are contained in the other
side of the bipartition, namely $W$, an assumption that is very natural the 
applications motivating this work. The nodes in $U$ and $W$ can naturally be
interpreted as jobs and machines, respectively, and our assumption on the 
failure model implies that some machines may fail.
Finally, we assume that the costs are associated with machines, i.e.\ with the 
nodes in $W$. Once a node $w\in W$ is included, all edges in $G$ connecting
it to nodes in $U$ are also automatically included.
Thus, a \emph{robust solution} is a node subset $W' \subseteq W$ such
that in any scenario $F$ (corresponding to deletion of some nodes in $W$)
the remaining nodes in $W' \setminus F$ suffice to construct a $U$-perfect matching,
and the goal is to find a robust solution of minimum cost.
We call this problem \emph{Node-Robust Assignment Problem}, and denote it by \vrpm.
Figure~\ref{fig:v-rap-example-and-its-solution} shows an example of a \vrpm\ instance.
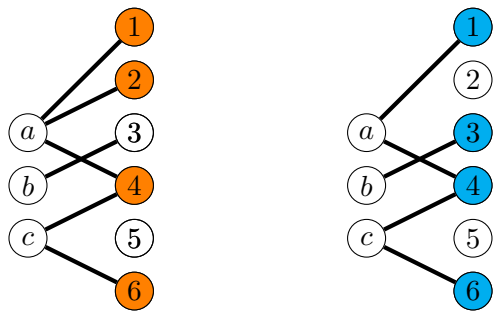
\begin{figure}[H]
	\label{fig:v-rap-example-and-its-solution}
	\centering
	\begin{minipage}{0.2\linewidth}
		\begin{tikzpicture}[scale=0.7, every node/.style={circle, draw, minimum size=0.5cm, inner sep = 0pt}]
		\node (a) at (0,1) [] {$a$};
		\node (b) at (0,0) [] {$b$};
		\node (c) at (0,-1) [] {$c$};
		
		\node (1) at (2,3) [] {$1$};
		\node (2) at (2,2) [] {$2$};
		\node (3) at (2,1) [] {$3$};
		\node (4) at (2,0) [] {$4$};
		\node (5) at (2,-1) [] {$5$};
		\node (6) at (2,-2) [] {$6$};

		\node (1) at (2,3) [fill=orange] {$1$};
		\node (2) at (2,2) [fill=orange] {$2$};
		\node (3) at (2,1) [] {$3$};
		\node (4) at (2,0) [fill=orange] {$4$};
		\node (5) at (2,-1) [] {$5$};
		\node (6) at (2,-2) [fill=orange] {$6$};

		\draw[-, ultra thick] (a) -- (1);
		\draw[-, ultra thick] (a) -- (4);
		\draw[-, ultra thick] (a) -- (2);
		\draw[-, ultra thick] (b) -- (3);
		\draw[-, ultra thick] (c) -- (4);
		\draw[-, ultra thick] (c) -- (6);  
		\end{tikzpicture}
	\end{minipage}
	\hspace{0.1\linewidth}
	\begin{minipage}{0.2\linewidth}
		\begin{tikzpicture}[scale=0.7, every node/.style={circle, draw, minimum size=0.5cm, inner sep = 0pt}]
		\node (a) at (0,1) [] {$a$};
		\node (b) at (0,0) [] {$b$};
		\node (c) at (0,-1) [] {$c$};
		
		\node (1) at (2,3) [] {$1$};
		\node (2) at (2,2) [] {$2$};
		\node (3) at (2,1) [] {$3$};
		\node (4) at (2,0) [] {$4$};
		\node (5) at (2,-1) [] {$5$};
		\node (6) at (2,-2) [] {$6$};	
		
		\begin{scope}
		\node (3) at (2,1) [fill=cyan] {$3$};
		\draw[-, ultra thick] (b) -- (3);
		\end{scope}			
		
		\begin{scope}
		\node (4) at (2,0) [fill=cyan] {$4$};
		\draw[-, ultra thick] (a) -- (4);
		\draw[-, ultra thick] (c) -- (4);
		\end{scope}			
		
		\begin{scope}
		\node (6) at (2,-2) [fill=cyan] {$6$};
		\draw[-, ultra thick] (c) -- (6);
		\end{scope}

		\begin{scope}
		\node (1) at (2,3) [fill=cyan] {$1$};
		\draw[-, ultra thick] (a) -- (1);
		\end{scope}			  
		\end{tikzpicture}
	\end{minipage}
	\caption{left: A \vrpm\ instance with four scenarios given by $\{1\}$, $\{2\}$,
		$\{4\}$, $\{6\}$ 
		(orange nodes);\newline
		right: A node-robust solution (cyan nodes)
		of the \vrpm\ instance. }
\end{figure}

A different view on \vrpm\ is the following.
The decision maker has to select a set of machines such that all jobs can be performed.
Ahead of this decision his adversary announces a list of sets of machines
from which he will sabotage one set of machines after the decision is made. 
The decision maker has to make sure that the act of sabotage does not
jeopardize the completion of all jobs on the selected machines. 

Before formally stating \vrpm, we briefly present two concrete motivating 
applications exhibiting its practical relevance.
\begin{itemize}
	\item
	\emph{Staff Scheduling and Rostering. }
	Rostering is an important task of Human Resource Management.
	The rostering process has four different aspects: strategic, tactical, operational and retrospective
	(for details we refer the reader to \cite{ernst_et_al_04}).
	\vrpm\ addresses the tactical dimension of rostering.
	The objective thereby is to deliver a roster for a certain planning horizon (e.g.\ a month), which can be used as
	a starting point for operative assignment decisions on a daily (or weekly) basis, resulting in a schedule.
	Naturally, a roster is subject to uncertainty because employees may
	get sick or not be able to work in a certain shift due to legal work
	load regulations or non-availability of crucial equipment. 
	Uncertainty of this kind can be modeled in the bulk-robust framework.
	A solution to \vrpm\ enables the decision maker to choose an
	assignment from the solution for the upcoming day (or week), meeting
	the latest needs of both the company and the personnel.  
	In the context of health care, for instance, \vrpm\ can be used to address
	continuity of service, maintaining full staff levels and minimizing
	costs simultaneously. 
	\item 
	\emph{Subcontracting. } 
	In a many companies numerous tasks are outsourced to subcontractors or freelancers.
	A typical example is the design and deployment of a new IT infrastructure. 
	During the evaluation process a redundant set of potential subcontractors is considered.
	It is expected that some of the candidates will not be available
	after the decision process due to various reasons such as commitment
	elsewhere in the desired period of time or sickness of key team
	members. It is hence important to contact and negotiate with a redundant
	set of subcontractors to later be able to accommodate the latter kind of
	unavailabilities. \vrpm\ can be used to decrease the costs of the negotiation
	and hiring process.
\end{itemize}

\bigskip
The formal setting of the Node-Robust Assignment Problem as studied
in this work is the following.
We are given a simple bipartite graph $G=(U \dcup W, E)$ with
$|U| \leq |W|$. 
Elements in $U$ are associated with jobs while elements from $W$
represent machines. In addition, each machine node $w$ is associated
with a non-negative cost $c_w\in\R_{\geq 0}$, yielding a cost vector
$c\in\R^W_{\geq 0}$.
A set $\rset \subseteq W$ of machine nodes is called an
\emph{assignment} of $G$ if the subgraph $G[U\dcup \rset]$
induced by the selected machine nodes in $\rset$ and all job nodes
from $U$ contains a $U$-perfect matching, 
i.e.\ a set of non-adjacent edges incident to all job nodes in $U$.
In the nominal version the objective is to find an assignment 
of $G$ with minimum cost. 
In the node-robust version, the set $W$ of machine
nodes is subject to uncertainty. 
The uncertainty is described by a list $\uset \subseteq 2^W$ of subsets of machine
nodes. Each element $F$ of $\uset$ defines a failure scenario, the 
occurrence of which involves the removal of all machine nodes
in $F$ from $G$. We are now interested in determining a subset
$\rset\subseteq W$ of machine nodes in such a way that,
for every failure scenario $F\in\uset$, the induced subgraph $G[U\dcup
\rset]-F$ contains a $U$-perfect matching. We call such a
subset a \emph{node-robust solution}. The aim of
\vrpm\ is to find, for a bipartite graph $G$ and for a set $\uset$ of
failure scenarios, a node-robust solution that is minimal 
with respect to the cost vector $c$.   

We will restrict ourselves in this manuscript to the case when each failure
scenario is defined by one single machine node, i.e.\ we assume that $\mathcal{F}$
is a family of singletons, and hence can simply be represented a subset
of $W$, namely $\uset \subseteq W$.
As shown in our previous work for \rap, we will prove in this work that
the node-robust version exhibits interesting structure under this
assumption. In particular, we will show that this case is hard to
approximate within a sub-logarithmic factor in general, and within 
a constant grater than one in the unweighted case. We also provide
approximation algorithms with matching approximation guarantees (up 
to small additive constants). 

The Node-Robust Assignment Problem under consideration is formally 
stated next.  
\begin{problem*}[The Node-Robust Assignment Problem (\vrpm)]
	\text{}
	\begin{itemize}
		\item \underline{Input:} 
		Tuple $(G,\uset,c)$, where $G := (U \dcup W,E)$ is a bipartite 
		graph, $\uset \subseteq W$ is a set of vulnerable nodes, and
		$c\in\R_{\geq 0}^W$ is a non-negative cost vector.   
		\item 
		\underline{Output:} If existent, an optimal solution to 
		\begin{align}
		\tag{\vrpm}
		\begin{array}{ll}
		\min & c(\rset)  \\ 
		\textnormal{s.t.} & 
		\forall \fset\in \uset:\;\; 
		G[U\dcup \rset] - \{f\}
		\text{ contains a $U$-perfect matching}, \\
		& \rset \subseteq W .
		\end{array}
		\end{align}
	\end{itemize}
\end{problem*} 
There are two important cases of \vrpm\ that are worth mentioning.
The first case concerns the structure of the uncertainty set.  
A \vrpm\ instance $((U\dcup W, E),\uset,c)$ is called \emph{uniform}
if each machine node is vulnerable, i.e.\ $\uset=W$. 
The second case is that of unit costs, i.e.\ the case $c_w=1$, for each $w\in W$. In this
case, one is interested in finding a node-robust solution with minimum
cardinality. We will denote this case by \uvrpm.

We remark that it is easy to verify whether a given instance $(G,\uset,c)$ of \vrpm\ is feasible. 
This can be achieved by applying any efficient bipartite maximum
matching algorithm $|\uset| \leq |W|$ times. 
Therefore, we will assume that every \vrpm\ instance
considered in this work is feasible.

\bigskip
Analogously to~\cite{adjiashvili_bindewald_michaels_icalp2016} for the
Edge-Robust Assignment Problem,  
we focus in this work on deriving hardness results and approximation
algorithms for \vrpm\ and \uvrpm. 
Our results are summarized in Table~\ref{tab:result-for-variants-of-the-RAP-problem}. This table
also compares our results for \vrpm\ with our results for \rap. 
\begin{table}[H]
	\label{tab:result-for-variants-of-the-RAP-problem}
	\centering
	\begin{tabular}{|c|c|c|}
		\hline
		Problem & hardness & algorithm's \\
		($\uset \subseteq W$, $E$) & of approximation & guarantee\\[1.5ex]
		\hline
		\vrpm & $\boldsymbol{d \log n}$, $d < 1$
		[Thm.~\ref{thm:hardness-of-V-RAP}] & $\boldsymbol{\log n +
			2}$
		[Thm.~\ref{thm:v-rap-log-n-plus-one-algorihm}]\\[1.5ex]  
		\uvrpm &  \textbf{no PTAS}
		[Thm.~\ref{thm:card-v-rap-is-hard-to-approximate}] &
		$\mathbf{1.75}$ [Thm.~\ref{thm:card-v-rap-admits-a-1.75-approximation}] \\ 
		& & ($\mathbf{2}$ on non-uniform instances) \\[1.3ex] 
		\hline 
		\rap & $\boldsymbol{d \log n}$, $d < 1$
		\cite[Thm.~3]{adjiashvili_bindewald_michaels_icalp2016} & $\boldsymbol{O(\log
			n)}$ (randomized)
		\cite[Thm.~4]{adjiashvili_bindewald_michaels_icalp2016}\\[1.5ex]  
		\urap$^{\P}$ & \textbf{no PTAS}
		\cite[Thm.~5]{adjiashvili_bindewald_michaels_icalp2016} &
		$\mathbf{1.5}$
		\cite[Thm.\,6]{adjiashvili_bindewald_michaels_icalp2016}
		\\  
		& & ($\mathbf{3}$ on non-uniform instances) \\
		\hline
	\end{tabular} 
	\caption{Summary of results for \vrpm\ as will be presented in this work and
		for \rap\ as stated in \cite{adjiashvili_bindewald_michaels_icalp2016} (for proofs, see the full preprint
		version \cite{icalp_full_version_16}) where $n$ represents the number of nodes in the underlying bipartite graph. \newline
		{\footnotesize $^\P$ \urap\ denotes the unweighted Edge-Robust Assignment Problem.}
	} 
\end{table}

\bigskip
The paper is organized as follows.
In Section~\ref{sec:relatedwork} we briefly review existing results
related to \vrpm\ (for related work concerning \rap\ see \cite{adjiashvili_bindewald_michaels_icalp2016}).
Section~\ref{sec:hardness-of-V-RAP} deals with the general, weighted
version of \vrpm, for which we show hardness of approximation 
and present an approximation algorithm with an approximation.
The bounds depend logarithmically on the number of nodes. 
In Section~\ref{sec:hardness-of-card-V-RAP} we focus on the
unweighted version \uvrpm. 
We first prove that there cannot be exist a \ptas\ for
\uvrpm, unless $\p=\np$. We then analyze the approximation algorithm
introduced for general \vrpm\ for the unweighted case. 
We show that this algorithm becomes a constant factor
approximation algorithms for \uvrpm. \\
One interesting fact about the Edge-Robust Assignment Problem is that this problem is
even \np-hard in its simplest variant, i.e.\ in case of two
vulnerable edges and unit weights (see \cite{adjiashvili_bindewald_michaels_icalp2016}, and \cite[Section
5]{icalp_full_version_16} for a proof).  
In Section~\ref{sec:cardvrap-twovulnerablemachinecase} we show that,
in contrast to the situation \rap, the case of two vulnerable 
machines nodes is solvable in polynomial time.

\paragraph{Notation.}
Throughout this work we use the following notation.
Let $G$ be a bipartite graph. By $V(G)$ and $E(G)$ we denote the node
set and the edge set of $G$. For a subset $E^\prime\subseteq E(G)$, $V(E^\prime)$ is
used to represent the set of all nodes incident by $E^\prime$. For a subset
$V^\prime \subseteq V(G)$, $E(V^\prime)$ denotes the set of edges
of $G$ with both endpoints in $V^\prime$. The subgraph
$(V^\prime, E(V^\prime))$ induced by a node subset $V^\prime$ is
abbreviated by $G[V^\prime]$.
For a subset $V^\prime\subseteq V(G)$ of nodes, we use the notation
$G-V^\prime$ to denote the graph resulting from $G$ when all
nodes from $V^\prime$ and all edges incident to some node of
$V^\prime$ are removed.
 
\section{Related work}
\label{sec:relatedwork}
 
Laroche et al.~\cite{laroche_et_al_14} considered a problem related to \vrpm.
For a given bipartite graph $G:=(U\dcup W, E)$ they consider the following
interdiction problem: 
Does the removal of $k$ arbitrary nodes from $W$ results in a
graph without a $U$-perfect matching? 
The authors were especially interested in computing the smallest
number $k$ for which the answer to the latter question is yes.
The motivation to study the problem comes from the nurse rostering
problem arising in health care. In that context one is interested in
determining the largest number of nurses that can be absent such that
all patients can still be treated adequately.  
This largest number can be seen as a measure for the resilience of a
health care provider with respect to staff unavailability.    
In the setting of \vrpm, the question above can be interpreted as
follows. 
Given a bipartite graph $G=(U \dcup W, E)$, what is the largest $k$ 
such that the machine node set $W$ is a feasible solution for
\vrpm\ when the scenarios are defined by $\uset_k = \{ F \subseteq W
\colon |F| = k \}$. 
To answer the question, the authors exploit in
\cite{laroche_et_al_14} the so-called $k$-extended Hall's condition: 
\begin{equation}
\label{eq:extended_halls_condition}
\tag{k-Hall}
\forall ~\emptyset \neq T \subseteq U : |T| + k \leq |\neighb_G(T)|
\end{equation}
which guarantees the existence of a node-robust assignment w.r.t.\ the
uncertainty set $\uset_k$. 
Because $G$ is bipartite, this question can be answered efficiently by solving an \ilp\ over an integral polytope $|U|$ times.

\begin{remark*}
  The results from Laroche et al.~\cite[Cor.~2]{laroche_et_al_14} imply
  that feasibility testing for \vrpm\  can be performed in polynomial time when the list 
  $\uset$ of scenarios is given implicitly.
  This is a major difference to \rap\ where deciding upon feasibility of
  an instance with implicitly given scenarios is an \np-complete problem
  (cf. \cite{icalp_full_version_16}), 
  which is an immediate consequence of the \np-completeness of the
  so-called Matching Preclusion Number Problem
  (Lacroix et al.~\cite[Thm.~6]{lacroix_et_al_12}, Dourado et al.~\cite[Thm.~2]{dourado_et_al_15})
\end{remark*}

Zenklusen~\cite{zenklusen2010matching} considered the Matching Interdiction
problem that asks for a given node-weighted graph and a budget to find a subset
of nodes that respects the budget constraint and that minimizes the
size of a maximum matching when the selected nodes are removed from
the graph. 

Arulselvan et al.~\cite{arulselvan_et_al_15} analyzed the following variant of assignment problems.
The input consists of an edge-weighted bipartite graph $G = (U \dcup W, E)$ and
lower and upper quotas $l, u \in \Z_{\geq 0}^W$.
The goal is to find a maximum-weight edge set $M$ such that each node in $U$ is 
incident to at most one edge in $M$.
Furthermore, the nodes in $W$ have either to respect the bounds given by the quota functions or not to be used at all.
Their main results yield a classification of several variants of the problem in terms of their complexity.

Katriel et al.~\cite{katriel_et_08} studied a two-stage stochastic
optimization problem on a bipartite graph $G:=(U\dcup W, E)$ with a cost
function on the edge set, that resembles \vrpm. 
The overall optimization task is to compute an edge subset that contains a maximum matching.
In the first stage there is no uncertainty and one can already select some edges at nominal costs.
In the second stage, uncertainty comes into
play in two variants: either the costs on the edges are uncertain, or some of the
 nodes from $W$ are deactivated.  
For both variants the goal in the second stage is to buy additional edges such that the edges bought in the two stages contain a maximum matching at minimum expected costs.
The main results include the derivation of lower bounds on approximation 
guarantees as well as approximation algorithms. 
Furthermore, the authors provide a randomized approximation algorithm
for the robust variant of the stochastic optimization problem under
consideration. 

The connection between the matching
number of a graph and node removal is also studied from a graph theoretical point of view.
For instance, Aldred et al.~\cite{aldred_et_al_07} provided conditions
under which special graph classes (grid graphs and $k$-fold product graphs) remain perfectly matchable 
after node deletions.
Favaron~\cite{favaron_96} investigated and characterized the class of
so-called $k$-factor-critical graphs, i.e.\ graphs on $n$ nodes such
that every subgraph on $n-k$ nodes is perfectly matchable. 
Note that such graphs can not be bipartite, hence those insights can not be applied here.

\section{Hardness and approximability of V-RAP}
\label{sec:hardness-of-V-RAP}

In this section we show that \vrpm\ is hard to approximate within
a factor of $d \log n$ for any $d < 1$ and provide
an approximation algorithm with a matching approximation ratio (up to an additive constant) \\   
The hardness of approximation is derived in
Section~\ref{sec:reductionfromsetcover}. As in the case for the 
Edge-Robust Assignment Problem
\cite{adjiashvili_bindewald_michaels_icalp2016,icalp_full_version_16}, 
the key ingredient is a reduction from the set cover
problem. \\
The approximation algorithm for \vrpm\ is presented and discussed in
Section~\ref{sec:log-n-approximation-for-V-RAP}. The core idea of
the algorithm is the following. The algorithm first determines a certain matching and
selects all machine nodes covered by the matching. In order to extend
this set of machines to a node-robust solution, a set cover instance
is constructed on all job nodes not saturated by the machine nodes
selected through the matching. The set cover instance is then solved
with the greedy algorithm~\cite{chvatal_79}.
 
\subsection{Hardness for \vrpm} 
\label{sec:reductionfromsetcover}
 
To prove our hardness result for \vrpm, we present a reduction
from the \np-hard set cover problem.  
\begin{problem*}[{Set Cover Problem (\setcover)}] 
  \label{prob:set-cover}
  \text{}
  \begin{itemize}
  \item \underline{Input:} Tuple $([k], \mS)$ with a finite ground set
    $[k] = \{1,\dots, k\}$ and a collection $\mS:=\{S_1, \dots, S_l\}$ of subsets of $[k]$, for some $k,l\in\Z_{\geq 1}$.    
  \item 
    \underline{Output:} Collection $\mC \subseteq \{S_1, \dots,
    S_l\}$ with $\bigcup_{S \in \mC} S =
    [k]$ minimizing $|\mC|$.
  \end{itemize}  
\end{problem*}
For a given instance $([k],\mS)$ of \setcover, existence of any cover for ground set
$[k]$ can be efficiently verified, simply by checking if $\bigcup_{S
  \in \mS} S = [k]$ holds.  
Thus, we will assume from now on that any \setcover\ instance considered is feasible.

Now let $\inst:=([k],\mS)$ be any feasible instance of
\setcover. We associate with $\inst$ the following instance
$\inst^\prime:=(G,\uset,c)$ of \vrpm.
The graph $G$
is obtained by applying the following steps.
\begin{itemize}
  \label{tsteps-v-rap-one-to-two}
\item[\textbf{\small (T1)}]
  For each $s\in [k]$, node $u_s$ is introduced and added to
  $U_{[k]}$. 
  For each $S_j\in \mS$, node $w_{S_j}$ is introduced and added to
  $W_{\mS}$.
  Furthermore, the edge $\{u_s, w_{S_j}\}$ is introduced and added to 
  $E_{\textnormal{\scriptsize SC}}$ whenever $s\in S_j$. 
\item[\textbf{\small (T2)}]
  For each $s\in [k]$, a copy $w_s$ of $u_s$ is introduced
  and added to $W_{[k]}$, and the edge $\{u_s,w_s\}$ is added
  to $E_{[k]}$. 	  
\end{itemize}
Note that edges from $E_{\textnormal{\scriptsize SC}}$ encode whether
an element $s\in [k]$ is contained in a subset $S\in \mS$, or not. 
Nodes from $W_{[k]}$ are used to ensure the feasibility of the \vrpm\
instance, while the nodes from $W_{\mS}$ indicate which elements from
$\mS$ are chosen to cover the ground set $[k]$.

Applying steps \textbf{\small (T1)} and \textbf{\small (T2)} yields the
graph $G:=(U_{[k]}\dcup (W_{\mS}\cup W_{[k]}),\,
E_{\textnormal{\scriptsize SC}}\cup E_{[k]})$. An example of such a graph is illustrated in
Figure~\ref{fig:v-rap-graph-corresponding-to-a-weighted-instance-of-set-cover}. 
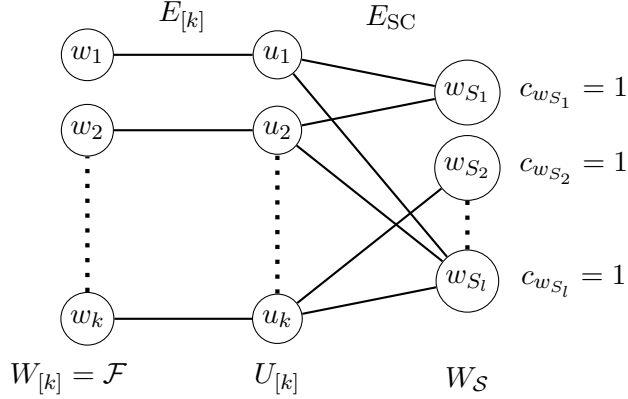
\begin{figure}[htb]
  \centering
  \begin{tikzpicture}[inner sep = 2pt,every node/.style={circle, draw}]
    \node (0) at (-1.5, 0) [] {$w_k$};
    \node (1) at (-1.5, 2.5) [] {$w_2$};
    \node (2) at (-1.5, 3.5) [] {$w_1$};
    
    \node (3) at (1.0, 0) [] {$u_k$};
    \node (4) at (1.0, 2.5) [] {$u_2$};
    \node (5) at (1.0, 3.5) [] {$u_1$};
    
    \node (6) at (3.5, 0.5) [] {$w_{S_l}$};
    \node (7) at (3.5, 2) [] {$w_{S_2}$};
    \node (8) at (3.5, 3) [] {$w_{S_1}$};
    
    \node at (-1.75,-0.8) [draw=none] {$W_{[k]}=\uset$};
    \node at (-0.25,4) [draw=none] {$E_{[k]}$};
    \node at (1.0,-0.8)  [draw=none] {$U_{[k]}$};
    \node at (2.5,4) [draw=none] {$E_{\textnormal{\scriptsize SC}}$};
    \node at (3.5,-0.8)  [draw=none] {$W_{\mS}$};
    
    \node at (4.9,0.5)  [draw=none] {$c_{w_{S_l}} = 1$};
    \node at (4.9,2)  [draw=none] {$c_{w_{S_2}} = 1$};
    \node at (4.9,3)  [draw=none] {$c_{w_{S_1}} = 1$};
    
    \draw[-, thick] (3) -- (7);
    \draw[-, thick] (3) -- (6);
    \draw[-, thick] (4) -- (8);
    \draw[-, thick] (4) -- (6);
    \draw[-, thick] (5) -- (6);
    \draw[-, thick] (5) -- (8);
    
    \foreach \i in {0,..., 2}
    {
      \pgfmathtruncatemacro{\head}{\i}
      \pgfmathtruncatemacro{\tail}{\head + 3}
      \draw (\head) edge [-,thick] (\tail);
    }
    
    \foreach \i in {0,..., 2}
    {
      \pgfmathtruncatemacro{\head}{\i * 3}
      \pgfmathtruncatemacro{\tail}{\head + 1}
      \draw[-, loosely dotted, ultra thick] (\head) -- (\tail);	
    }
    
  \end{tikzpicture}
  \caption{
    Graph $G$ corresponding to a Set Cover instance 
    with ground set $[k]$ and covering sets $S_1,\dots, S_l$.
  }
  \label{fig:v-rap-graph-corresponding-to-a-weighted-instance-of-set-cover}
\end{figure} 

To complete the construction of $\inst^\prime$ we set $\uset = W_{[k]}$ and  
\begin{align}
  \label{eq_redcostfunc}
  c\in\R^{W_{\mS}\cup W_{[k]}}_{\geq 0}
  \quad\textnormal{with}\quad
  w\mapsto c_w:=
  \left\{
  \begin{array}{ll}
    1, & \textnormal{ if } w\in W_{\mS},\\
    0, & \textnormal{ if }  w\in W_{[k]}.
  \end{array}
         \right.
\end{align}
The latter transformation can clearly be carried out in polynomial time.

The next lemma highlights the relation between the \setcover\ and \vrpm instances. 
\begin{lemma}
  \label{lem:set-cover-v-rap-feasibility-version-1}
  Let  $\inst:=([k],\mS)$ be a (feasible) instance of \setcover,
  and let $\inst^\prime:=(G,\uset,c)$ be the corresponding \vrpm\ instance
  with $G:=(U_{[k]}\dcup (W_{\mS}\cup W_{[k]}), E_{[k]}\cup
  E_{\textnormal{\scriptsize SC}})$ obtained by applying steps 
  \textbf{\small (T1) -- (T2)}, uncertainty set $\uset=W_{[k]}$ and cost
  vector $c\in\R^{W_{\mS}\cup W_{{[k]}}}_{\geq 0}$ as specified in
  Equation~\eqref{eq_redcostfunc}. Then, 
    for $\rset \subseteq W_{\mS}\cup W_{[k]}$ with $W_{[k]} \subseteq \rset$, it holds that
    $\rset$ is feasible to $\inst^\prime$ if and only if
    $\mathcal{C}_{\rset} := \{S_j \in \mS \,\mid\, w_{S_j} \in \rset \}$ is
    feasible to $\inst$. Furthermore, such sets $\rset$ and $\mathcal{C}_{\rset}$ have 
    identical costs.
\end{lemma}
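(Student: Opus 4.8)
The plan is to reduce feasibility of $\rset$ to a condition that can be checked scenario by scenario, exploiting the fact that every copy node $w_s\in W_{[k]}$ has degree one in $G$, its unique neighbor being $u_s$. Since $\uset=W_{[k]}$ and the hypothesis $W_{[k]}\subseteq\rset$ guarantees that every copy node already lies in $\rset$, feasibility of $\rset$ to $\inst^\prime$ is equivalent to the statement: for every $s\in[k]$, the graph $G[U_{[k]}\dcup\rset]-\{w_s\}$ contains a $U_{[k]}$-perfect matching. I would then analyze these $k$ conditions one at a time.

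The core step is a local matching argument for a fixed $s$. For one direction, after deleting $w_s$ every job node $u_t$ with $t\neq s$ still has its private copy $w_t\in\rset$ available, so matching $u_t\mapsto w_t$ for all $t\neq s$ leaves only $u_s$ uncovered; by construction the neighbors of $u_s$ in $\rset$ other than $w_s$ are exactly the set nodes $w_{S_j}\in\rset$ with $s\in S_j$. Because the copies $w_t$ are pairwise distinct and distinct from every set node, matching $u_s$ to any such $w_{S_j}$ creates no conflict and completes a $U_{[k]}$-perfect matching. For the converse, the same degree-one structure is used: in any $U_{[k]}$-perfect matching of $G[U_{[k]}\dcup\rset]-\{w_s\}$ the node $u_s$ must be matched, and since its only neighbor in $W_{[k]}$ has been removed, it is forced onto some set node $w_{S_j}\in\rset$ with $s\in S_j$. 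Either way, the scenario indexed by $w_s$ is satisfied if and only if $s$ belongs to some $S_j$ with $w_{S_j}\in\rset$, that is, if and only if $s$ is covered by $\mC_{\rset}$.

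Quantifying over all $s\in[k]$ then gives that $\rset$ is feasible to $\inst^\prime$ exactly when $\bigcup_{S\in\mC_{\rset}}S=[k]$, which is precisely feasibility of $\mC_{\rset}$ for $\inst$; this is the claimed equivalence. The cost identity is then immediate from the definition of $c$ in Equation~\eqref{eq_redcostfunc}: the copy nodes of $W_{[k]}\subseteq\rset$ contribute $0$ while each chosen set node contributes $1$, so $c(\rset)=|\rset\cap W_{\mS}|=|\mC_{\rset}|$, which equals the \setcover\ objective of $\mC_{\rset}$.

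I do not anticipate a genuine obstacle. The only point requiring care is the ``only if'' direction of the per-scenario equivalence, where one must argue that deleting the single copy $w_s$ really does force $u_s$ onto a set node rather than allowing it to be rematched through another copy. This is settled by the structural observation that each copy node is adjacent to exactly one job node, so no copy other than $w_s$ is ever available to $u_s$.
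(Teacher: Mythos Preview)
Your proof is correct and follows essentially the same approach as the paper: both directions hinge on the observation that in the scenario removing $w_s$ the job node $u_s$ can only be matched to a set node $w_{S_j}$ with $s\in S_j$, while all other $u_t$ are matched to their private copies $w_t$. The paper constructs the same matching $\{u_s,w_{S_j}\}\cup E_{[k]}\setminus\{\{u_s,w_s\}\}$ for the ``if'' direction and uses the same forcing argument for the ``only if'' direction; your additional remark about the degree-one structure of the copy nodes merely makes explicit what the paper leaves implicit.
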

\begin{proof}
  Let $\inst:=([k],\mS)$ be a given \setcover\ instance, and let
  $\inst^\prime:=(G,W_{[k]},c)$ be the corresponding \vrpm\ instance.
    \emph{``only if'' part.\quad}
    Let $\rset$  be any feasible solution to $\inst^\prime$ with
    $W_{[k]} \subseteq\rset$, and let $s\in [k]$.  
    To show that $s$ is contained in some set of $\mathcal{C}_{\rset}$, consider
    the node $w_s \in\uset$. As $X$ is feasible to $\inst^\prime$,
    there must exist a $U_{[k]}$-perfect matching $M$ in $G[U_{[k]} \dcup \rset] -
    \{w_s\}$.   
    Since $w_s \notin V(M)$, node $u_s$ must be matched with some node
    from $\{w_{S_1},\dots, w_{S_l}\}$ by the corresponding edge from
    $E_{\textnormal{\scriptsize SC}}$, i.e.\ $w_{S_j} \in \rset$, for some $S_j \in \mS$ with
    $s\in S_j$. This implies that $S_j\in \mathcal{C}_{\rset}$. 
    It follows that $s$ is covered, and that $\mathcal{C}_{\rset}$ is a
    feasible cover for $\inst$. 
    
    \emph{``if'' part.\quad}
    Let $\mC\subseteq \mS$ be a feasible cover for $\inst$. Then,
    define
    \begin{equation*}
      {\rset}:= W_{[k]} \;\cup\;\{w_{S_j}\mid S_j\in\mathcal{C}\},
      \textnormal{ implying that } \mathcal{C}=\mathcal{C}_X \textnormal{ holds}.
    \end{equation*}
    Recall that $X$ is feasible to $\inst^\prime$ if and only
    if $\rset \setminus \{ w_s \} $ contains an assignment of $G$, for
    all $w_s \in\uset=W_{[k]}$. 
    Consider an arbitrary $w_s \in\uset$. 
    We have to show that $G[U_{[k]} \dcup X]-\{w_s\}$ contains a $U_{[k]}$-perfect
    matching. 
    Let $S_j \in \mathcal{C}$ be any set covering $s$. 
    Such a set exists, since $\mathcal{C}$ is a cover. 
    The desired perfect matching can now be defined as
    \begin{align*}
      M := \{u_s, w_{S_j}\} \cup E_{[k]}\setminus \big\{\{w_s, u_s \}\big\}. 
    \end{align*}

    Finally, the costs of the two solutions is clearly identical by the definition
    of the reduction.
\end{proof}

\begin{theorem}
  \label{thm:hardness-of-V-RAP}
  Unless \np\ $\subseteq \dtime (n^{\log \log n})$,
  \vrpm\ admits no polynomial $d\log n$-approximation algorithm for any $d < 1$, where $n$ represents the number of nodes in the underlying graph.   
\end{theorem}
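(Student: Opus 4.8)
The plan is to promote the polynomial-time construction in steps \textbf{\small (T1)}--\textbf{\small (T2)} to a fully approximation-preserving reduction, and then to invoke the tight inapproximability of \setcover\ due to Feige. The groundwork is already in place: Lemma~\ref{lem:set-cover-v-rap-feasibility-version-1} supplies a cost-preserving correspondence between \vrpm\ solutions containing $W_{[k]}$ and feasible covers. The first thing I would record is that this restriction to solutions containing $W_{[k]}$ is without loss of generality: every node in $W_{[k]}$ has zero cost, and adding machine nodes can never destroy an existing $U_{[k]}$-perfect matching, so for any feasible $\rset$ the set $\rset \cup W_{[k]}$ is again feasible and has the same cost. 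Hence some optimal solution contains $W_{[k]}$, and along the bijection $\mathcal{C}_{\rset} = \{S_j \in \mS \mid w_{S_j}\in\rset\}$ of the lemma we have $c(\rset)=|\mathcal{C}_{\rset}|$, because only the $W_{\mS}$-nodes carry (unit) cost. In particular $\opt(\inst^\prime)=\opt(\inst)$, and a cover translates back to a solution of equal cost.

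Next I would set up the contradiction. Suppose a polynomial $d\log n$-approximation algorithm for \vrpm\ exists for some fixed $d<1$, where $n=|V(G)|=2k+l$ with $k=|[k]|$ and $l=|\mS|$. Running it on $\inst^\prime$ produces a feasible $\rset$, which we may assume contains $W_{[k]}$ by adjoining the free nodes, satisfying $c(\rset)\le d\log n\cdot\opt(\inst^\prime)$. Translating back via the lemma yields a cover $\mathcal{C}_{\rset}$ with $|\mathcal{C}_{\rset}|\le d\log n\cdot\opt(\inst)$, i.e.\ a $d\log n$-approximation for \setcover. To contradict Feige's theorem, which states that \setcover\ admits no $(1-\varepsilon)\ln k$-approximation in the ground-set size $k$ unless $\np\subseteq\dtime(n^{\log\log n})$, it then suffices to show $d\log n \le (1-\varepsilon)\ln k$ for a suitable $\varepsilon>0$ on all large enough instances.

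The crux, and the step I expect to be the main obstacle, is relating $\log n$ to $\ln k$ tightly enough to preserve the constant $d<1$ rather than merely a constant factor. Since $n=2k+l$, this hinges on controlling the number of sets $l$ relative to the ground set size $k$: if the hard \setcover\ instances can be taken with $n=k^{1+o(1)}$ (for instance $l\le k^{o(1)}$), then $\log n=(1+o(1))\ln k$ up to the fixed base change between $\log$ and $\ln$, and for every fixed $d<1$ one obtains $d\log n\le(1-\varepsilon)\ln k$ on all sufficiently large instances, closing the argument. I would therefore either appeal to a formulation of Feige's hardness in which the number of sets is subpolynomial in the ground set, or argue that the hard instances may be taken with $l$ polynomially bounded and then absorb the resulting constant into the $o(1)$ and base-change bookkeeping; this accounting is the delicate point and must be carried out carefully to justify the sharp threshold $d<1$.

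The remaining ingredients are routine and already available: the construction runs in polynomial time (noted after Equation~\eqref{eq_redcostfunc}), feasibility of $\inst^\prime$ follows since $W_{[k]}$ itself certifies a solution, and the cost identity is exactly Lemma~\ref{lem:set-cover-v-rap-feasibility-version-1}. Thus the only genuine work lies in the size comparison between the graph and the underlying \setcover\ ground set.
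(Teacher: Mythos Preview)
Your approach is exactly the paper's: combine Lemma~\ref{lem:set-cover-v-rap-feasibility-version-1} with Feige's \setcover\ inapproximability. The paper's proof is in fact a two-sentence pointer to precisely these two ingredients and does not address the $\log n$ versus $\ln k$ size-comparison issue you flag, so your sketch is more careful than the paper on this point rather than less.
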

\begin{proof}
  \citet{FeigeSetCover} showed that, for any $d < 1$,  \setcover\ admits no polynomial time  
  $d\log n$-approximation algorithm ($n$ being the size of the ground set) 
  unless we have that \np$\,\subseteq\, \dtime (n^{\log \log n}$). Combining this result with 
  Lemma~\ref{lem:set-cover-v-rap-feasibility-version-1} yields the proof.
\end{proof}

We conclude by showing that the result in Theorem~\ref{thm:hardness-of-V-RAP} also 
holds the more restricted uniform case, in which every machine is vulnerable.

\begin{proposition}
  \label{prop_hardness-of-uniform-V-RAP}
  Theorem~\ref{thm:hardness-of-V-RAP} even holds for the uniform case, i.e.\ with $\uset = W_{\mS}\cup W_{[k]}$.
\end{proposition}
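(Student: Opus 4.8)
The plan is to reuse the reduction from \setcover\ built in steps \textbf{\small (T1) -- (T2)} verbatim---the same graph $G$ and the same cost vector $c$ from Equation~\eqref{eq_redcostfunc}---and only replace the uncertainty set by $\uset=W_{\mS}\cup W_{[k]}$, making every machine node vulnerable. The entire argument then reduces to re-establishing the feasibility correspondence of Lemma~\ref{lem:set-cover-v-rap-feasibility-version-1} for this larger uncertainty set; once that is done, the proof of Theorem~\ref{thm:hardness-of-V-RAP} carries over unchanged, since neither the graph nor its node count $n$ is affected.

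First I would prove the analogue of Lemma~\ref{lem:set-cover-v-rap-feasibility-version-1}: for every $\rset\subseteq W_{\mS}\cup W_{[k]}$ with $W_{[k]}\subseteq\rset$, the set $\rset$ is feasible for the uniform instance if and only if $\mathcal{C}_{\rset}$ is a cover of $[k]$, and the costs coincide. The scenarios split into two types. For a failing node $w_s\in W_{[k]}$ the situation is identical to Lemma~\ref{lem:set-cover-v-rap-feasibility-version-1}: deleting $w_s$ forces $u_s$ to be matched through $E_{\textnormal{\scriptsize SC}}$ to some $w_{S_j}\in\rset$ with $s\in S_j$, which is precisely the covering condition. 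For a failing node $w_{S_j}\in W_{\mS}$ nothing new is imposed, because $W_{[k]}\subseteq\rset$ guarantees that $E_{[k]}=\{\{u_s,w_s\}\mid s\in[k]\}$ is a $U_{[k]}$-perfect matching in $G[U_{[k]}\dcup\rset]$ avoiding $W_{\mS}$ altogether; removing any single $w_{S_j}$ leaves this matching intact. Hence the $W_{\mS}$-scenarios are vacuously satisfied and the characterisation is literally the same as in the non-uniform case.

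The one point that requires care is discharging the hypothesis $W_{[k]}\subseteq\rset$ for an \emph{arbitrary} feasible solution returned by an approximation algorithm. Here I would exploit that the nodes in $W_{[k]}$ have cost zero: given any feasible $\rset$ for the uniform instance, the augmented set $\rset':=\rset\cup W_{[k]}$ is still feasible---enlarging the selected machine set only enlarges each induced subgraph $G[U_{[k]}\dcup\rset]-\{f\}$ and thus preserves every required $U_{[k]}$-perfect matching---and it satisfies $c(\rset')=c(\rset)$. Consequently the optimum of the uniform instance equals that of the non-uniform instance, and hence equals the minimum cover size of $\inst$.

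Finally I would combine these observations with the hardness of \setcover\ due to \citet{FeigeSetCover} exactly as in the proof of Theorem~\ref{thm:hardness-of-V-RAP}: an $\alpha$-approximate solution $\rset$ to the uniform instance is augmented to $\rset'$ at no extra cost, and $\mathcal{C}_{\rset'}$ is then a cover of size $c(\rset')\le\alpha\cdot\opt$. Because the graph $G$, and therefore its node count $n$, is unchanged, the same $d\log n$ inapproximability threshold applies. The main---though mild---obstacle is precisely verifying that enabling failures of the set-nodes $W_{\mS}$ adds no effective constraint; the shielding matching $E_{[k]}$ is what makes this transparent.
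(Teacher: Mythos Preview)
Your proposal is correct and follows essentially the same approach as the paper: use the identical reduction with the enlarged uncertainty set, observe that the zero-cost nodes $W_{[k]}$ can be assumed to lie in any solution, and note that their inclusion furnishes the matching $E_{[k]}$ which handles all failures in $W_{\mS}$. The paper's proof is a terse sketch of precisely this argument, while you spell out the augmentation step and the two-case scenario analysis explicitly.
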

\begin{proof}
  The proof follows from the same reduction with the only difference
  that the set of vulnerable machine nodes contains all machine nodes (resulting
  in a uniform instance). The fact that all machines in $W_{[k]}$ have cost zero implies
  that we can assume them to be part of any solution. The inclusion of these machines 
  allows to match all jobs in the case that any machine in $W_{\mS}$ fails. The rest
  of the proof remains the same. 
\end{proof} 

\subsection{A $(\log |U| +2)$-approximation for \vrpm}
\label{sec:log-n-approximation-for-V-RAP}

In this section we present a $(\log|U| + 2)$-approximation algorithm
for \vrpm\ stated formally as Algorithm~\ref{alg:v-rap-log-n-plus-one-approximation}.
For a node $u\in U \dcup W$ we denote by $\delta(v) \subseteq E$ the set of edges 
incident to $u$.

The algorithm starts by computing a $U$-perfect matching $M$
(step~\ref{alg:step:v-rap-log-n-approximation-computing-matching})
minimizing the cost of job nodes $W^M$ that are incident to the matching.
This is easily achieved by defining an appropriate cost function
on the edge set and computing a minimum cost perfect matching.
The set $W^M$ is chosen to be part of the solution
(step~\ref{alg:step:v-rap-log-n-approximation-initial-machine-set})
and then extended to a feasible \vrpm \ solution as follows.  
The algorithm identifies all job nodes that are matched by $M$ with a vulnerable  
machine node in $W^M$, yielding the set $U_\uset$
(step~\ref{alg:step:v-rap-log-n-approximation-define-critical
  nodes}) called \emph{critical nodes}. 
We think of the matching $M$ as the basis of any $U$-perfect matching for 
each of the failure scenarios, and the additional machine nodes we 
add are designed to replace some vulnerable machine in $W^M$, in case it fails.
The set $U_\uset$ is hence the set of jobs that may become unmatched 
by removing from $M$ edges that are incident to vulnerable nodes.

When a node in $u\in U_\uset$ becomes unmatched in $M$ due to failure of its
corresponding machine $w\in W^M$, it is possible to obtain a new $U$-perfect 
matching by finding an $M$-alternating path starting at $u$, ending with
some node $w'\in W\setminus W^M$ and not using the edge $\{u,w\}$. Such a path
starts and ends with an edge not in $M$, and hence it can be used to increase
the matching size by one, resulting in a new $U$-perfect matching. To allow
this new matching we only need to include the end-node $w'$ of this path
in the solution. We can hence think of $w'$ as a node in $W\setminus W^M$
\emph{covering} the scenario corresponding to $u$. 

This covering interpretation is then used in the algorithm to define a set 
covering problem.
For each machine node $w\in W\setminus W^M$, the algorithm
determines a subset $R_w \subseteq U_\uset$ of critical job nodes $u$ (steps
\ref{alg:step:v-rap-log-n-approximation-compute-cover-subsets-begin} 
-- \ref{alg:step:v-rap-log-n-approximation-compute-cover-subsets-end})
with the property that there is an $M$-alternating $u$-$w$ path in $G$.
Finally, a weighted set cover instance $\inst^{\textnormal{\scriptsize
    SC}}$ with ground set $U_\uset$ and
with the collection of subsets $R_w$, $w\in W\setminus W^M$, is
constructed
(step~\ref{alg:step:v-rap-log-n-approximation-construct-set-cover-instance})
and approximately solved by the greedy algorithm
(step~\ref{alg:step:v-rap-log-n-approximation-set-cover-approximation}).   

\begin{algorithm}[htb]
	\caption{: A $(\log|U| + 2)$-approximation for \vrpm}
	\begin{algorithmic}[1]	
		\label{alg:v-rap-log-n-plus-one-approximation}
		\REQUIRE{ A feasible \vrpm-instance $\inst$: $G=(U \dcup W, E)$, $c \in \R^W_{> 0}$, $\uset \subseteq W$.}
		\ENSURE{ A feasible solution $\rset$. }
		\STATE{ Define an auxiliary cost function $d \in \R^E$: for each
			node $w \in W$ and for each $e \in \delta(w)$ set $d_e := c_w$.} 
		\STATE{ $ M \gets $ minimum-cost $U$-perfect matching w.r.t. cost function $d$ }
		\label{alg:step:v-rap-log-n-approximation-computing-matching}
		\STATE{ $W^M \gets V(M) \cap W$ }
		\label{alg:step:v-rap-log-n-approximation-initial-machine-set}
		\STATE{ $U_\uset \gets \{ u \in U \mid u$ is matched to a
			vulnerable machine node in $M \}$ }
		\label{alg:step:v-rap-log-n-approximation-define-critical nodes}
		\FOR{each machine node $w \in W \setminus W^M$ }
		\label{alg:step:v-rap-log-n-approximation-compute-cover-subsets-begin}
		\STATE{%
			$R_w \gets 
			\{ u \in U_\uset \mid G[U \dcup  (W^M\cup\{w\})]$ contains an
			$M$-alternating $u$-$w$-path$\}$
		}
		\label{alg:step:v-rap-log-n-approximation-aternating-paths}
		\ENDFOR 
		\label{alg:step:v-rap-log-n-approximation-compute-cover-subsets-end}
		\STATE{Construct a set cover instance
			$\inst^{\textnormal{\scriptsize SC}}:= \big(U_\uset,\,\{ R_w
			\mid w \in W \setminus W^M \}\big)$ with weight function $g: \{ R_w
			\mid w \in W \setminus W^M \}\rightarrow\R_{\geq 0}$,
			$R_w\mapsto g(R_w):= c_w$}
		\label{alg:step:v-rap-log-n-approximation-construct-set-cover-instance}
		\STATE{Apply the greedy algorithm to the set cover instance 
			$\inst^{\textnormal{\scriptsize SC}}$ to obtain an approximate
			solution $W^{\textnormal{\scriptsize SC}}$.}
		\label{alg:step:v-rap-log-n-approximation-set-cover-approximation}
		\RETURN{$\rset = W^M \dcup W^{\textnormal{\scriptsize SC}}$}
		\label{alg:step:v-rap-log-n-approximation-output-X}
	\end{algorithmic} 
\end{algorithm}

As we informally sketched above, the approximate solution of 
$\inst^{\textnormal{\scriptsize SC}}$ defines a set $W^{\textnormal{\scriptsize SC}}$
of machine nodes whose addition to $W^M$ results in a feasible 
solution $\rset$ for the \vrpm\ instance.
This property is established by the next lemma. 
\begin{lemma}  
  \label{lem:feasible-solution-imply-solutions-to-set-cover}
  Let $\inst= (G,\uset,c)$ be a feasible \vrpm\
  instance with $G:=(U\dcup W, E)$, and let $M$ be any $U$-perfect matching in $G$. We denote by
  $W^M$ the machine nodes covered by $M$. 
  Furthermore, let $U_\uset$ be the set of all critical job nodes as determined in
  step~\ref{alg:step:v-rap-log-n-approximation-define-critical nodes}
  of Algorithm~\ref{alg:v-rap-log-n-plus-one-approximation}, and
  let $X\subseteq W$ be any subset of machine nodes containing $W^M$. 
  Define the set system $\mathcal{R}^{\textnormal{\scriptsize SC}}_X:=\{R_w\mid w\in
  X\setminus W^M\}$ where $R_w$ represents the set of
  critical job nodes calculated in
  step~\ref{alg:step:v-rap-log-n-approximation-aternating-paths} of
  Algorithm~\ref{alg:v-rap-log-n-plus-one-approximation}. \\
  Then, $X$ is feasible for $\inst$ if and only if
  $\mathcal{R}^{\textnormal{\scriptsize SC}}_X$ forms a cover for
  $U_\uset$ (i.e.\ $\mathcal{R}^{\textnormal{\scriptsize SC}}_X$ is a
  feasible solution for the set cover instance
  $\inst^{\textnormal{\scriptsize SC}}$ constructed in  
  step~\ref{alg:step:v-rap-log-n-approximation-construct-set-cover-instance}
  of Algorithm~\ref{alg:v-rap-log-n-plus-one-approximation}).  
\end{lemma}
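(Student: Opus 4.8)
The plan is to prove the two directions of the equivalence separately, in both cases exploiting the theory of $M$-alternating paths and the structure of the symmetric difference of two $U$-perfect matchings. Throughout I rely on the observation that a scenario $f\in\uset$ is harmless unless $f\in W^M$: if $f\notin W^M$ then $M$ itself is a $U$-perfect matching of $G[U\dcup X]-\{f\}$ (here I use $W^M\subseteq X$, so $M\subseteq G[U\dcup X]$), hence only vulnerable machines actually used by $M$ can endanger feasibility. These are in bijection with the critical jobs, since $u\in U_\uset$ precisely when its partner $f:=M(u)$ is vulnerable and lies in $W^M$. Thus both feasibility of $X$ and the covering property of $\mathcal{R}^{\textnormal{\scriptsize SC}}_X$ can be verified job-by-job over $U_\uset$.

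For the \emph{``if''} direction (cover $\Rightarrow$ feasible) I fix a scenario $f\in\uset$. If $f\notin W^M$ I am done as above. Otherwise $u:=M(f)\in U_\uset$, so by assumption some $R_w$ with $w\in X\setminus W^M$ contains $u$, i.e.\ there is an $M$-alternating $u$-$w$ path $P$ in $G[U\dcup(W^M\cup\{w\})]$. The key sub-step is a parity argument: since $u$ is $M$-matched while $w\notin W^M$ is $M$-exposed, $P$ must begin and end with non-matching edges, which forces $P$ to avoid both the edge $\{u,f\}$ and the node $f$ entirely. Consequently $P$ is an augmenting path for the matching $M\setminus\{\{u,f\}\}$ inside $G[U\dcup X]-\{f\}$, and augmenting along it produces a $U$-perfect matching of $G[U\dcup X]-\{f\}$. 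As $f$ was arbitrary, $X$ is feasible.

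For the \emph{``only if''} direction (feasible $\Rightarrow$ cover) I fix a critical job $u\in U_\uset$ with $f:=M(u)$ and, by feasibility, a $U$-perfect matching $N$ of $G[U\dcup X]-\{f\}$. I then analyse the symmetric difference $M\triangle N$. Because both matchings are $U$-perfect, every job has degree $0$ or $2$ in $M\triangle N$, so every endpoint of its alternating paths is a machine node; and $f$, being matched by $M$ but not by $N$, is the endpoint of one such path $Q$ starting with the edge $\{f,u\}$. The decisive step is to show that the second endpoint $w$ of $Q$ lies in $X\setminus W^M$: a path between two machine nodes in a bipartite graph has even length, so $Q$ terminates with an $N$-edge, whence $w$ is $M$-exposed and $w\notin W^M$, while $w\in V(N)$ together with $N\subseteq G[U\dcup X]$ gives $w\in X$. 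Deleting the initial edge $\{f,u\}$ from $Q$ leaves precisely an $M$-alternating $u$-$w$ path whose interior machine nodes are all $M$-matched, so it lives in $G[U\dcup(W^M\cup\{w\})]$; therefore $u\in R_w$ and $u$ is covered.

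The main obstacle, in both directions, is the careful bookkeeping of the alternating and symmetric-difference structure: pinning down the parities that force the paths to begin and end with the correct edge type, guaranteeing that the relevant path avoids $f$ (``if'' direction) and that its far endpoint escapes $W^M$ rather than returning to an $M$-matched machine (``only if'' direction), and confirming that all interior nodes stay inside the induced subgraph $G[U\dcup(W^M\cup\{w\})]$ used to define $R_w$. Once these structural facts are established, the augmentation and covering conclusions follow routinely.
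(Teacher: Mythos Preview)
Your proof is correct and follows essentially the same strategy as the paper. The only noticeable difference is in the ``only if'' direction: the paper removes the edge $\{u,f\}$ from $M$ and invokes Berge's theorem to obtain an augmenting path in $G[U\dcup X]-\{f\}$, whereas you work directly with the symmetric difference $M\triangle N$; these amount to the same construction, so the two arguments are equivalent in substance.
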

\begin{proof}
  Let $X\subseteq W$ with $W^M\subseteq X$ be a feasible solution to
  $\inst$, and consider an arbitrary node $u \in U_\uset$.
  We have to show that there exists a set $R_w$ containing $u$, for
  some $w \in X \setminus W^M$. 
  \\
  Now, let $f$ be the vulnerable machine matched to $u$ by $M$.
  We consider the matching  $\hat M:= M \setminus \{\{u,f\}\}$ of size
  $|U| - 1$.
  As $\rset$ is feasible, the subgraph $H := G[U \dcup X] - \{f\}$
  contains a $U$-perfect matching that has size $|U|$. 
  Since $\hat M$ is also a matching in $H$, it follows from a result
  by Berge that it can be augmented to a $U$-perfect matching using an
  $\hat M$-augmenting path $\hat P$. 
  The path $\hat P$ starts in $u$ and ends in some node $w \in X \setminus W^M$.
  Since $\hat P$ is $\hat M$-augmenting, $w$ is the only node in
  $\hat P$ that is not incident to $\hat M$, i.e.\ $\hat P$ is a path in 
  $G[U\dcup (W^M \cup \{w\})]$. 
  By construction of $\hat M$, $\hat P$ must be an odd $M$-alternating path
  and hence $u \in R_w$.
  As $u\in U_\uset$ was chosen arbitrarily, we can conclude that
  $\mathcal{R}_X^{\textnormal{\scriptsize SC}}$ forms a cover of
  $U_{\uset}$ and is hence feasible to $\inst^{\textnormal{\scriptsize SC}}$. 

  For the reverse direction, let $\mathcal{R}^{\textnormal{\scriptsize SC}}\subseteq
  \{R_w\mid w\in W\setminus W^M\}$ be any feasible solution to the set
  cover instance $\inst^{\textnormal{\scriptsize SC}}$ as constructed in
  step~\ref{alg:step:v-rap-log-n-approximation-construct-set-cover-instance}
  of Algorithm~\ref{alg:v-rap-log-n-plus-one-approximation}.
  We have to show that 
  $\rset=W^M\dcup \{w\in W\setminus 
  W^M\mid R_w\in \mathcal{R}^{\textnormal{\scriptsize SC}}\}$ is
  feasible for $\inst$.
  For this, we prove that for each vulnerable machine node $f\in \uset$,
  there exists a $U$-perfect matching not using $f$ in the
  graph $G[U \dcup \rset]$.  
  Note that $M$ is contained in $G[U \dcup \rset]$,
  i.e.\ $M$ provides the desired $U$-perfect matching for all $f \in
  \uset$ not incident to $M$.\\
  Now, consider an arbitrary $f \in \uset$ incident to $M$. 
  We denote by $u$ the job node matched to $f$ in $M$. 
  By definition, $u$ is a critical job node, i.e.\ $u\in U_\uset$.
  From the fact that $\mathcal{R}^{\textnormal{\scriptsize SC}}$ forms
  a cover of  $U_\uset$,  
  it follows that $u \in R_{w}$, for some $w \in X \setminus W^M$.
  By definition of $R_w$,  there exists an $M$-alternating path $P$
  from $u$ to $w$ in $G[U\dcup (W^M\cup \{w \})]$. 
  As $w$ is not covered by $M$, the path $P$ ends with an edge (incident to
  $w$) that does not belong to $M$. Note that $P$ has an odd number of edges. Thus, the first edge of $P$
  incident with $u$ is also not contained in $M$, i.e.\ $\{u,f\}\notin
  P$. 
  Thus,  $M \bigtriangleup (P + \{u,f\})$ is a $U$-perfect matching in
  $G[U \dcup \rset]$ not using $f$.  
\end{proof}    
   
We are now ready to prove the main result of this section.
\begin{theorem}
  \label{thm:v-rap-log-n-plus-one-algorihm}
  Algorithm~\ref{alg:v-rap-log-n-plus-one-approximation} 
  is a polynomial $(\log |U| + 2)-$approximation algorithm for \vrpm.
\end{theorem}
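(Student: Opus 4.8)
The plan is to establish two things about Algorithm~\ref{alg:v-rap-log-n-plus-one-approximation}: that it runs in polynomial time, and that its output is feasible and has cost at most $(\log|U|+2)\cdot\opt$, where $\opt$ denotes the cost of an optimal \vrpm\ solution. Feasibility is immediate from Lemma~\ref{lem:feasible-solution-imply-solutions-to-set-cover}: the greedy algorithm returns a set cover $\mathcal{R}^{\textnormal{\scriptsize SC}}$ of $U_\uset$, so the backward direction of that lemma guarantees that $\rset = W^M \dcup W^{\textnormal{\scriptsize SC}}$ is feasible for $\inst$. Polynomiality follows since each step is efficient: computing a minimum-cost $U$-perfect matching is polynomial, the sets $R_w$ are computed by a reachability/alternating-path search for each of the at most $|W|$ machine nodes, and the greedy set cover routine runs in polynomial time.

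The heart of the argument is the cost bound, and the plan is to decompose the cost of the output as $c(\rset) = c(W^M) + c(W^{\textnormal{\scriptsize SC}})$ and bound each summand against $\opt$ separately. First I would bound $c(W^M)$. Since $M$ is a minimum-cost $U$-perfect matching with respect to the auxiliary cost $d$ (where $d_e = c_w$ for $e \in \delta(w)$), and since any feasible \vrpm\ solution $\rset^\ast$ must in particular contain a $U$-perfect matching, the cost $c(W^M)$ is at most the cost of the cheapest $U$-perfect matching contained in $G[U \dcup \rset^\ast]$, which is at most $c(\rset^\ast) = \opt$. This yields $c(W^M) \le \opt$. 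For the second summand, I would invoke the performance guarantee of the greedy set cover algorithm~\cite{chvatal_79}, namely that it produces a cover of weight at most $H(|U_\uset|) \le H(|U|) \le \log|U| + 1$ times the optimal fractional (equivalently, integral) set cover weight for $\inst^{\textnormal{\scriptsize SC}}$. The key link is then to argue that the optimal set cover value for $\inst^{\textnormal{\scriptsize SC}}$ is itself at most $\opt$: given an optimal \vrpm\ solution $\rset^\ast$ containing $W^M$, the forward direction of Lemma~\ref{lem:feasible-solution-imply-solutions-to-set-cover} shows $\mathcal{R}^{\textnormal{\scriptsize SC}}_{\rset^\ast}$ is a feasible cover of $U_\uset$ of weight $c(\rset^\ast \setminus W^M) \le \opt$.

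The main obstacle, and the step requiring the most care, is precisely this last link: Lemma~\ref{lem:feasible-solution-imply-solutions-to-set-cover} is stated for subsets $X$ containing $W^M$, whereas an arbitrary optimal \vrpm\ solution $\rset^\ast$ need not contain the particular matched set $W^M$. The clean way around this is to compare instead against $\rset^\ast \cup W^M$, or to argue that one may assume $W^M \subseteq \rset^\ast$ after paying the already-bounded cost $c(W^M) \le \opt$; in either case the weight of the induced cover $\mathcal{R}^{\textnormal{\scriptsize SC}}$ is bounded by a quantity not exceeding $\opt$. Combining the two bounds gives
\begin{equation*}
c(\rset) = c(W^M) + c(W^{\textnormal{\scriptsize SC}}) \le \opt + (\log|U| + 1)\cdot\opt = (\log|U| + 2)\cdot\opt,
\end{equation*}
which is the claimed approximation ratio. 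I would close by noting that feasibility of $\inst$ guarantees $\inst^{\textnormal{\scriptsize SC}}$ is coverable, so the greedy step always succeeds.
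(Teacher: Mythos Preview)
Your proposal is correct and follows essentially the same route as the paper: decompose $c(\rset)=c(W^M)+c(W^{\textnormal{\scriptsize SC}})$, bound $c(W^M)\le c(\opt)$ via optimality of $M$, and bound $c(W^{\textnormal{\scriptsize SC}})\le(\log|U|+1)\cdot c(\opt)$ by combining the greedy guarantee with the forward direction of Lemma~\ref{lem:feasible-solution-imply-solutions-to-set-cover} applied to the optimum. In fact you are slightly more careful than the paper on one point: you explicitly handle the hypothesis $W^M\subseteq X$ of the lemma by passing to $\rset^\ast\cup W^M$, whereas the paper applies the lemma to $\opt$ directly without justifying this (the fix is exactly yours, since $(\opt\cup W^M)\setminus W^M=\opt\setminus W^M$).
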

\begin{proof}
  The algorithm can clearly be implemented in polynomial time.
  For a reference on algorithms for minimum cost perfect matching 
  computations and augmenting path computations we refer the reader to~\cite{lovasz_plummer_book_86}.

  Let $\mathcal{R}^{\textnormal{\scriptsize SC}}$ be the
  approximate solution of the set cover instance computed by the
  greedy algorithm.
  Then, Algorithm~\ref{alg:v-rap-log-n-plus-one-approximation} returns
  $X=W^M\dcup W^{\textnormal{\scriptsize SC}}$ where  $W^{\textnormal{\scriptsize SC}}:= \{w\in W\setminus 
  W^M\mid R_w\in \mathcal{R}^{\textnormal{\scriptsize SC}}\}$.
  As $\mathcal{R}^{\textnormal{\scriptsize SC}}$ forms a cover of
  $U_\uset$, it follows from
  Lemma~\ref{lem:feasible-solution-imply-solutions-to-set-cover} that
  $X$ is feasible for $\inst$. 

  It remains to prove that the computed solution $\rset$ satisfies the
  desired quality. For this, let $\opt$ and $\opt^{\textnormal{\scriptsize SC}}$ be optimal
  solutions for the given \vrpm\ instance $\inst$ and for the
  constructed set cover instance $\inst^{\textnormal{\scriptsize SC}}$ with associated cost $c(\opt)$ and
  $
  g(\opt^{\textnormal{\scriptsize SC}}) = \sum_{R_w\in
    \textnormal{\scriptsize OPT}^{\textnormal{\footnotesize SC}}} g(R_w) = \sum_{R_w\in
    \textnormal{\scriptsize OPT}^{\textnormal{\scriptsize SC}}}  c_w.  
  $
  The cost of the returned solution $X$ is
  \begin{align}
    \label{eq_thm:v-rap-log-n-plus-one-algorihm_cost_X}
    c(X) = c(W^M) + c(W^{\textnormal{\scriptsize SC}}).
  \end{align}
  As $M$ is chosen in
  step~\ref{alg:step:v-rap-log-n-approximation-computing-matching} to
  be a $U$-perfect matching minimizing the cost of the incident machine 
  nodes and since any feasible \vrpm\ solution contains some $U$-perfect matching we obtain
  \begin{align}
    \label{eq_thm:v-rap-log-n-plus-one-algorihm_cost_WM}
    c(W^M)\leq c(\opt).
  \end{align}
  Since we solve the set cover instance with the greedy algorithm
  we have
  \begin{align}
    \label{eq_thm:v-rap-log-n-plus-one-algorihm_approx-cost-estimate}
    c(\mathcal{R}^{\textnormal{\scriptsize SC}}) \leq (\log |U_\uset| + 1) \cdot
    c(\textnormal{\opt}^{\textnormal{\scriptsize SC}}),
  \end{align}
  due to~\cite{chvatal_79}.
  As $\opt$ is feasible to the \vrpm\ instance
  $\inst$,
  Lemma~\ref{lem:feasible-solution-imply-solutions-to-set-cover}
  implies that the associated collection $\mathcal{R}^{\textnormal{\scriptsize
      SC}}_{\textnormal{\scriptsize OPT}}=\{R_w \mid w\in \opt\setminus\, W^M\}$ is feasible to the set cover 
  instance $\inst^{\textnormal{\scriptsize SC}}$. Thus, we further have that 
  \begin{align}
    \label{eq_thm:v-rap-log-n-plus-one-algorihm_costestimate-optsc}
    g\left(\opt^{\textnormal{\scriptsize SC}}\right)\;\leq\; g\left(\mathcal{R}^{\textnormal{\scriptsize
    SC}}_{\textnormal{\scriptsize OPT}}\right) = \sum_{R_w\in
    {{\mathcal{R}}}^{\textnormal{\tiny
    SC}}_{\textnormal{\tiny OPT}}} c_w = c(\opt\setminus\,W^M) 
    \leq c(\opt).
  \end{align}
  Combining the results from
  Equations~\eqref{eq_thm:v-rap-log-n-plus-one-algorihm_cost_X}--\eqref{eq_thm:v-rap-log-n-plus-one-algorihm_costestimate-optsc}
  and using $U_{\uset} \subseteq U$, we obtain the desired bound on the approximation guarantee
  \[ c(\rset) \leq (\log |U| +2) \cdot c(\opt). \]
\end{proof}

We note that the approximation guarantee we obtain is actually 
$\log |U_\uset| +2$, which is at most
$$
\log \min\{ |U|, |\uset|\} + 2,
$$
since $|U_\uset| \leq \min\{ |U|, |\uset|\}$ by definition of the matching $M$ and $U_\uset$.
We hence obtain a better guarantee if the number of faulty machine
nodes is significantly smaller than $|U|$.

\section{Hardness and approximability of card-V-RAP}
\label{sec:hardness-of-card-V-RAP}

This section deals with the unweighted version \uvrpm. 
Similarly to the unweighted
edge-robust variant \urap, we first prove in
Section~\ref{sec:hardnessofcardvrap} that \uvrpm\ does not admit a
\ptas, provided that $\p\neq\np$. This is achieved by
reducing the vertex cover problem in sub-cubic graphs to \uvrpm\ and
invoking a result of Alimonti and Kann \cite{alimonti_kann_97} for
the former problem.
In Section~\ref{sec:card-v-rap-admits-a-o(1)-approximation}, we
refine the analysis of our approximation algorithm presented for
general \vrpm.
In particular, we show that this algorithm is a constant
factor approximation algorithm for \uvrpm\ instances.
In Section~\ref{sec:cardvrap-twovulnerablemachinecase} we
consider the case when the number of vulnerable machine nodes is two. 
We prove that \uvrpm\ is solvable in polynomial 
time in this case.

\subsection{Hardness of card-V-RAP}
\label{sec:hardnessofcardvrap}

In this section we prove the following result for \uvrpm. 
\begin{theorem} 
  \label{thm:card-v-rap-is-hard-to-approximate}
  For some constant $\delta > 1$, there is no polynomial $\delta$-approximation
  for uniform \uvrpm, unless $\p = \np$.
\end{theorem}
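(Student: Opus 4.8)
The plan is to establish APX-hardness of uniform \uvrpm\ by a gap-preserving reduction from a problem already known to be hard to approximate within some constant. The most natural source is the Vertex Cover problem restricted to graphs of bounded degree (sub-cubic graphs), since the introduction announces exactly this: the result of Alimonti and Kann~\cite{alimonti_kann_97} states that Vertex Cover on sub-cubic graphs is \apx-hard, i.e.\ there is a constant $\varepsilon > 1$ below which no polynomial-time approximation exists unless $\p = \np$. My goal would therefore be to encode an instance $H = (V, E_H)$ of Vertex Cover in a sub-cubic graph as a uniform \uvrpm\ instance $(G, W, \mathbf{1})$ so that small vertex covers correspond to cheap node-robust solutions, and the bounded degree lets me keep the sizes of the two objectives within a constant factor of each other, which is what turns a constant-factor inapproximability for Vertex Cover into one for \uvrpm.

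The key design step is the gadget construction. For each vertex $v \in V$ of $H$ I would introduce a machine node $w_v$ on the $W$-side; selecting $w_v$ into the solution $\rset$ should model putting $v$ into the cover. I then need job nodes in $U$ whose robust matchability forces the selected machines to form a cover. The natural encoding is per edge: for each edge $e = \{u,v\} \in E_H$ I create one (or a constant number of) job node(s) in $U$ adjacent only to $w_u$ and $w_v$, so that in the scenario where $w_u$ fails the edge-job must be rematchable via $w_v$ and vice versa. Because the instance is uniform, \emph{every} machine is vulnerable, so to guarantee feasibility and to control the baseline cost I would add, as in the set-cover reduction of steps~\textbf{\small (T1)--(T2)}, a set of ``copy'' machine nodes that privately saturate a corresponding set of job nodes and keep the graph $U$-perfectly matchable in every single-node failure scenario. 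The crucial invariant to arrange is: the edge-jobs of $G$ admit a $U$-perfect matching surviving the deletion of any one selected machine if and only if the chosen machines $\{w_v : w_v \in \rset\}$ cover all edges of $H$, while the additive contribution of the forced copy-machines and any padding is an affine function of the Vertex Cover value with a bounded multiplicative slack.

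The verification splits into the standard two implications, mirroring Lemma~\ref{lem:set-cover-v-rap-feasibility-version-1}. First, from a vertex cover $C$ of $H$ I would build a node-robust solution by taking the machines $\{w_v : v \in C\}$ together with all forced padding machines, and argue via alternating-path / Hall-type reasoning that for every single vulnerable machine $f$ the graph $G[U \dcup \rset] - \{f\}$ still contains a $U$-perfect matching; the cover property is exactly what supplies an alternate partner for each edge-job whose current partner is removed. Conversely, from any feasible node-robust solution $\rset$ I would read off $C := \{v : w_v \in \rset\}$ and show that if some edge $e = \{u,v\}$ were uncovered then the scenario deleting the machine carrying $e$'s job would leave that job unmatchable, contradicting feasibility. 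Tracking costs through both directions yields an affine relation $|\rset| = |C| + (\text{const depending only on } H)$, so that an $\alpha$-approximation for \uvrpm\ would translate into a $(1 + o(1))\alpha$-type approximation for bounded-degree Vertex Cover.

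The main obstacle I anticipate is \emph{preserving the gap as a multiplicative constant}, not merely achieving \np-hardness. A raw reduction would give an \emph{additive} offset from the forced zero-cost-analogue copy machines, and in the unit-cost setting those additive terms inflate the objective and can wash out the constant factor. This is precisely where the sub-cubic restriction is indispensable: bounded degree forces $|E_H| = \Theta(|V|)$ and guarantees that the optimal vertex cover is $\Theta(|V|)$, so any fixed additive padding is only a constant fraction of the optimum and the constant inapproximability threshold $\delta$ survives, possibly with a diminished but still strictly-greater-than-one constant. I would spend the bulk of the technical effort checking this gap arithmetic explicitly and confirming that the padding needed to make \emph{every} single-machine failure survivable (the price of uniformity) stays linear in $|V|$ and hence does not overwhelm the vertex-cover signal.
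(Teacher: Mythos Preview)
Your plan follows the paper's route exactly: reduce from Vertex Cover in sub-cubic graphs, viewed as a Set Cover instance with ground set $E_H$ and covering sets $\delta(v)$, and use the degree bound so that the additive padding is only a constant fraction of the optimum and the Alimonti--Kann gap survives. The target relation $|\rset| = |C| + \text{const}$ is precisely the paper's $|\rset_\mC| = 2k + |\mC|$.

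There is, however, one concrete missing ingredient. As you describe it, the edge-job for $e=\{u,v\}$ is adjacent only to $w_u,w_v$ and (after {\small\textbf{(T1)--(T2)}}) a private copy machine $w_e$. In the uniform unit-cost setting nothing \emph{forces} these copy machines into the solution: a feasible solution may drop some $w_e$ and compensate with an extra vertex machine, and then the optimal \uvrpm\ value drops strictly below $k+|\text{min VC}|$. Already for $H$ a triangle the optimum of the resulting instance is $4$, not $3+2=5$: the set $\{w_a,w_b,w_c,w_{ab}\}$ is node-robust. This breaks the affine correspondence you need for the gap arithmetic, and your phrase ``forced copy-machines'' leaves unspecified exactly the mechanism that would make them forced.

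The paper closes this gap with an additional step {\small\textbf{(T3)}}: for each ground element $s$ it introduces a new job $\bar u_s$ and a new machine $\bar w_s$, with $\bar u_s$ adjacent only to $w_s$ and $\bar w_s$. Since $\bar u_s$ has exactly two neighbours and both are vulnerable in the uniform instance, every feasible solution must contain all of $W_{[k]}\cup\bar W_{[k]}$, which yields the exact relation $|\rset_\mC|=2k+|\mC|$ and lets your gap computation go through verbatim. Adding this forcing gadget (and noting that it contributes only $\Theta(k)$ extra nodes, preserving the sub-cubic arithmetic) is the one piece your proposal needs.
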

To prove Theorem~\ref{thm:card-v-rap-is-hard-to-approximate}, we
resort to the reduction from Set Cover discussed in
Section~\ref{sec:hardness-of-V-RAP}. Since we now wish to deal with
uniform weights on the machine nodes, the former reduction needs to
be adjusted.
In particular, we cannot assume that $W_{[k]}$ is contained in some
optimal solution since including all of $W_{[k]}$ might always be sub-optimal. 
To be more precise, let $\inst:=([k],\mS)$ be a feasible instance of
\setcover, and let $G:=(U_{[k]}\dcup (W_{\mS}\cup W_{[k]}), E_{\textnormal{\scriptsize SC}}\cup
E_{[k]})$ be the resulting graph obtained by performing steps {\small\bf
  (T1)} and  {\small\bf (T2)} from
Section~\ref{sec:hardness-of-V-RAP}. We apply the following additional step to $G$. 
\begin{itemize}
  \label{v-rap-tstep-three}
\item[{\small\bf (T3)}] 
  For each $s\in [k]$, two further copies $\bar u_s$ and $\bar w_s$ are
  introduced and added to $\bar U_{[k]}$ and to $\bar W_{[k]}$, respectively.
  Then, for each $s\in[k]$, the edges $\{w_s,\bar u_s\}$ and $\{\bar
  u_s,\bar w_s\}$ are introduced and added to $E_{[k]}$.
\end{itemize}
Let $\bar G:=(\bar U \dcup \bar W, \bar E)$ be the resulting graph with
$\bar U:= U_{[k]}\cup \bar U_{[k]}$, 
$\bar W:= W_{\mS}\cup W_{[k]}\cup \bar W_{[k]}$
and
$\bar E:= E_{\textnormal{\scriptsize SC}}\cup E_{[k]}$.
An example of such a graph is illustrated in 
Figure~\ref{fig:v-rap-graph-corresponding-to-an-unweighted-instance-of-minimum-cover}. 
Note that the size of $\bar G$ is still polynomial in $k$ and $l$, the
input length of the corresponding set cover instance $\inst$.
We can now prove Theorem~\ref{thm:card-v-rap-is-hard-to-approximate}.
\begin{figure}[htb]
	\centering
	\begin{tikzpicture}[inner sep = 2pt,every node/.style={circle, draw}]
	\node (0) at (-3, 0) [] {$\bar{w}_k$};
	\node (1) at (-3, 2.5) [] {$\bar{w}_2$};
	\node (2) at (-3, 3.5) [] {$\bar{w}_1$};
	
	\node (3) at (1.5, 0) [] {$u_k$};
	\node (4) at (1.5, 2.5) [] {$u_2$};
	\node (5) at (1.5, 3.5) [] {$u_1$};
	
	\node (6) at (3.5, 0.5) [] {$w_{S_l}$};
	\node (7) at (3.5, 2) [] {$w_{S_2}$};
	\node (8) at (3.5, 3) [] {$w_{S_1}$};
	
	\node (9) at (-1.5, 0) [] {$\bar{u}_k$};
	\node (10) at (-1.5, 2.5) [] {$\bar{u}_2$};
	\node (11) at (-1.5, 3.5) [] {$\bar{u}_1$};
	
	\node (12) at (0, 0) [] {${w}_k$};
	\node (13) at (0, 2.5) [] {${w}_2$};
	\node (14) at (0, 3.5) [] {${w}_1$};
	
	\node at (3.5,-0.8) [draw=none] {$W_{\mS}$};
	\node at (1.5,-0.8) [draw=none] {$U_{[k]}$};
	\node at (0.0,-0.8) [draw=none] {$W_{[k]}$};
	\node at (-1.5,-0.8) [draw=none] {$\bar{U}_{[k]}$};
	\node at (-3,-0.8) [draw=none] {$\bar{W}_{[k]}$};
	
	\draw[-, thick] (3) -- (7);
	\draw[-, thick] (3) -- (6);
	\draw[-, thick] (4) -- (8);
	\draw[-, thick] (4) -- (6);
	\draw[-, thick] (5) -- (6);
	\draw[-, thick] (5) -- (8);
	
	\foreach \i in {0,..., 2}
	{
		\pgfmathtruncatemacro{\head}{\i}
		\pgfmathtruncatemacro{\tail}{\head + 9}
		\draw (\head) edge [-,thick] (\tail);
	}
	
	\foreach \i in {9,..., 11}
	{
		\pgfmathtruncatemacro{\head}{\i}
		\pgfmathtruncatemacro{\tail}{\head + 3}
		\draw (\head) edge [-,thick] (\tail);
	}
	
	\foreach \i in {12,..., 14}
	{
		\pgfmathtruncatemacro{\head}{\i}
		\pgfmathtruncatemacro{\tail}{\head - 9}
		\draw (\head) edge [-,thick] (\tail);
	}
	
	\foreach \i in {0,..., 4}
	{
		\pgfmathtruncatemacro{\head}{\i * 3}
		\pgfmathtruncatemacro{\tail}{\head + 1}
		\draw[-, loosely dotted, ultra thick] (\head) -- (\tail);	
	}
	
	\end{tikzpicture}
	
	\caption{
		The graph $\bar G$ corresponding to a Set Cover instance 
		with ground set $[k]$ and covering sets $S_1,\dots, S_l$.
	}
	\label{fig:v-rap-graph-corresponding-to-an-unweighted-instance-of-minimum-cover}
\end{figure}
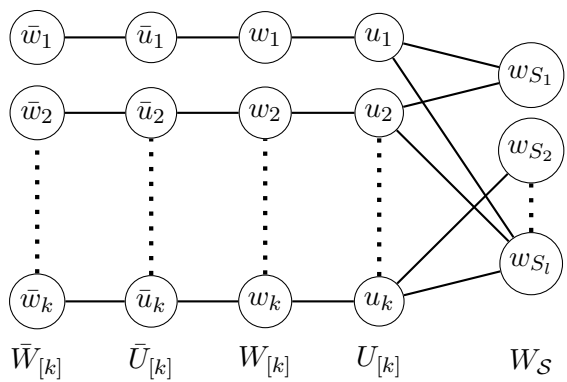

\begin{proof}
  It is well known that the \emph{Vertex Cover Problem in sub-cubic
    Graphs} (\vcthree) on an input graph $H=(V_H,E_H)$ with $|E_H|=k$
  can be equivalently restated as an instance of the Set Cover
  Problem, where the ground set is $E_H\cong[k]$ and each $S_j\in\mS$
  corresponds to a cut set $\delta(v)$, for some $v\in V_H$, i.e.\
  $\mS=\{\delta(v)\mid v\in V_H\}$.  
  As $H$ is sub-cubic, we have that $|S_j|\leq 3$, for all $S_j\in\mS$. Moreover, $|\{S_j\in \mS \,\mid\, s\in S_j\}|
  = 2$ holds for each $s\in [k]$, namely $s\in S_{j_1}$ and $s\in
  S_{j_2}$  where $s$ represents edge $e=\{v_1,v_2\}\in E_H$ and
  $S_{j_1}$ and $S_{j_2}$ correspond to $\delta(v_1)$ and $\delta(v_2)$.
  \\
  Now, let a \vcthree\ instance be presented as a \setcover\ instance
  $([k],\mS)$. Furthermore, let $\bar G$ be the graph obtained from
  $([k], \mS)$ by applying the transformation steps {\small\bf
    (T1)--(T3)}, and let $\bar{\inst}$ be the uniform \uvrpm\ instance
  induced by $\bar G$. 
  \\
  Lemma~\ref{lem:set-cover-v-rap-feasibility-version-1} has to be
  adjusted in the following manner. 
  First note that any feasible solution $\rset \subseteq \bar W$ to
  \vrpm\ with $\uset = \bar W$ has to contain both sets $W_{[k]}$ and
  $\bar W_{[k]}$, because the jobs corresponding to node set  $\bar
  U_{[k]}$ are only adjacent to those nodes. 
  The vulnerability of nodes in $W_{[k]}$ implies again that $\rset$
  corresponds to the feasible cover $\mC_\rset := \{ S_j \mid w_{S_j}
  \in \rset \}$. 
  Conversely, any feasible cover $\mC$ for $([k],\mS)$ gives rise to a
  feasible solution $\rset_\mC := \{ w_{S_j} \mid S_j \in \mC \} \cup W_{[k]}
  \cup \bar W_{[k]}$ for the corresponding uniform \uvrpm\ instance
  with cost $|\rset_\mC | = 2k + |\mC|$.  
  \\
  Next, we observe that $\bar G$ contains $5k$ edges (as $([k],\mS)$
  encodes a \vcthree\ instance) and that any feasible solution of
  \vcthree\ has size of at least $k/3$.     
  Now assume on contrary that for any constant $\delta > 1$, there is
  a polynomial $\delta$-approximation $A_\delta$ for uniform \uvrpm.
  Due to the relation $|\rset_\mC | = 2k + |\mC|$,  $A_\delta$ can be
  used to approximate \vcthree\ within a certain constant factor
  $\alpha(\delta)>1$ where $\alpha(\delta)\rightarrow 1$, for
  $\delta\rightarrow 1$. 
  However, Alimonti and Kann proved in~\cite{alimonti_kann_97} that
  there exists a constant $\alpha >1$  
  such that \vcthree\ does not admit a polynomial $\alpha$-approximation algorithm unless \p$=$\np.
  This completes the proof.
\end{proof}

\subsection{A $O(1)$-approximation algorithm for uniform card-V-RAP}
\label{sec:card-v-rap-admits-a-o(1)-approximation}

In this section we prove the following result.
\begin{theorem}
  \label{thm:card-v-rap-admits-a-1.75-approximation}
  Uniform \uvrpm\ admits a polynomial $1.75$-approximation algorithm.
\end{theorem}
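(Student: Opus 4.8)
The plan is to run Algorithm~\ref{alg:v-rap-log-n-plus-one-approximation} and to sharpen its analysis for unit costs, adding one extra lower bound that is available only in the uniform case. First observe that when $c\equiv 1$ the auxiliary edge costs are all equal, so the matching $M$ computed in step~\ref{alg:step:v-rap-log-n-approximation-computing-matching} may be taken to be an arbitrary $U$-perfect matching and satisfies $|W^M|=|U|$; moreover $\uset=W$ forces every matched job to be critical, i.e.\ $U_\uset=U$. The returned solution has cardinality $|\rset|=|W^M|+|W^{\mathrm{SC}}|=|U|+|W^{\mathrm{SC}}|$, since $W^{\mathrm{SC}}\subseteq W\setminus W^M$ is disjoint from $W^M$ by construction.

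I would use two lower bounds on $\opt$. The first is matching-based and holds for every (also non-uniform) instance: any feasible solution must contain a $U$-perfect matching, hence at least $|U|$ machines, so $|\opt|\ge|U|$. The second exploits uniformity: a set of exactly $|U|$ machines can support a $U$-perfect matching in essentially one way and loses feasibility once any single machine is deleted, so in the uniform case $|\opt|\ge|U|+1$. Combined with the trivial bound $|W^{\mathrm{SC}}|\le|U_\uset|=|U|$ (the greedy cover never selects more sets than there are ground-set elements), this already yields $|\rset|\le 2|U|\le 2|\opt|$, which is exactly the factor $2$ claimed for non-uniform unit-cost instances and serves as the baseline to beat.

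To reach $1.75$ one must save on the backup step precisely when $|U|$ is close to $|\opt|$. For this I would bound the optimum of the constructed set-cover instance $\inst^{\mathrm{SC}}$ from above: applying Lemma~\ref{lem:feasible-solution-imply-solutions-to-set-cover} to $\opt\cup W^M$ (which is feasible and contains $W^M$) shows that $\{R_w\mid w\in\opt\setminus W^M\}$ covers $U_\uset$, so the optimal cover has value at most $|\opt\setminus W^M|=|\opt|-|\opt\cap W^M|$. Replacing the greedy solver in step~\ref{alg:step:v-rap-log-n-approximation-set-cover-approximation} by a constant-factor procedure tailored to the alternating-path (reachability) structure of the sets $R_w$ then yields a bound of the form $|W^{\mathrm{SC}}|\le\rho\cdot\big(|\opt|-|U|\big)$ in the relevant regime. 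I would then distinguish two cases according to the size of the base matching relative to $|\opt|$: when $|U|$ is large, $|\opt|-|U|$ is small and the structural bound controls $|W^{\mathrm{SC}}|$; when $|U|$ is small, the trivial bound $|W^{\mathrm{SC}}|\le|U|$ already suffices. Balancing the two regimes in $|\rset|=|U|+|W^{\mathrm{SC}}|$ against $|\opt|\ge|U|+1$ is designed to produce the claimed $|\rset|\le\tfrac74|\opt|$.

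The hard part is the backup step. The instance $\inst^{\mathrm{SC}}$ is defined relative to the algorithm's \emph{own} matching $M$, whereas the cheap cover certified by $\opt$ lives, a priori, with respect to $\opt$'s base matching; reconciling the two so that $|\opt|-|U|$ genuinely upper-bounds the cover value in our instance, and squeezing out the constant $7/4$ rather than the baseline $2$, is the delicate point. I expect this to require the matching-exchange argument sketched above together with the $|U|/|\opt|$ case split, crucially using that in the uniform setting every job is critical ($U_\uset=U$) and every machine is vulnerable, so that no additional robustness requirement is imposed on the backup machines themselves beyond covering $U_\uset$.
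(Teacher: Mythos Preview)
Your proposal identifies the right algorithm and the right difficulty, but it does not close the gap, and the step you gesture at does not work as stated. Two concrete issues. First, Lemma~\ref{lem:feasible-solution-imply-solutions-to-set-cover} gives a cover of value $|\opt\setminus W^M|$, not $|\opt|-|U|$; since the algorithm's matching $M$ need not lie inside $\opt$, the quantity $|\opt\cap W^M|$ can be much smaller than $|U|$, so $|\opt\setminus W^M|$ is in general close to $|\opt|$, not to $|\opt|-|U|$. You acknowledge this mismatch at the end but offer no mechanism to repair it. Second, ``replacing the greedy solver by a constant-factor procedure tailored to the alternating-path structure of the $R_w$'' is not available: the hardness construction in Section~\ref{sec:reductionfromsetcover} embeds an arbitrary set-cover instance into exactly this $R_w$-structure, so there is no constant-factor approximation for $\inst^{\mathrm{SC}}$ in general.

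The paper takes a different route that sidesteps both problems: it keeps the greedy cover and instead mines a \emph{new lower bound on $|\opt|$} from the greedy run itself. Call an iteration \emph{productive} if it covers at least two jobs, and let $p$ be the total number of jobs covered in productive iterations. Then there are at most $p/2$ productive and exactly $|U|-p$ nonproductive iterations, so $|\rset|\le |U|+p/2+(|U|-p)=2|U|-p/2$. The crucial observation is that any two jobs covered in nonproductive iterations must have \emph{disjoint} neighborhoods in $G$ (otherwise greedy would have covered them together in a productive iteration); since in the uniform case every job needs at least two machines in $\opt$, this yields $|\opt|\ge 2(|U|-p)$. Combining $|\opt|\ge\max\{|U|,\,2(|U|-p)\}$ with $|\rset|\le 2|U|-p/2$ and splitting on whether $2p\ge|U|$ gives the $7/4$ ratio. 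The point is that the parameter $p$ simultaneously controls the algorithm's output size \emph{and} a lower bound on $\opt$, so the matching-mismatch you worried about never arises.
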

\begin{proof}
We show that Algorithm~\ref{alg:v-rap-log-n-plus-one-approximation}, 
when applied to feasible uniform \uvrpm\ instances, already guarantees
the desired approximation ratio. 
For clarity we rewrite Algorithm~\ref{alg:v-rap-log-n-plus-one-approximation}
in terms of uniform \uvrpm\ to obtain Algorithm~\ref{alg:card-v-rap-1.75-approximation}.
Since we aim to obtain a constant factor approximation ratio we can no longer directly 
rely on the approximation guarantee of the greedy algorithm of an arbitrary set cover 
instance, and instead perform an analysis of the entire algorithm simultaneously. 
We thus write the steps of the greedy algorithm explicitly in 
Algorithm~\ref{alg:card-v-rap-1.75-approximation}.
\begin{algorithm}[H] 
  \caption{: A $1.75$-approximation algorithm for uniform \uvrpm.}
  \begin{algorithmic}[1]
    \label{alg:card-v-rap-1.75-approximation}
    \REQUIRE{ A feasible \uvrpm-instance: $G=(U \dcup W, E)$, $\uset = W$.}
    \ENSURE{ A feasible solution $\rset$.}
    \STATE{ $ M \gets $ any $U$-perfect matching in $G$ } \label{alg:step:card-v-rap-compute-matching}
    \STATE{ $W^M \gets V(M) \cap W$} 
    \label{alg:step:card-v-rap-initial-node-set}
    \FOR{each machine node $w \in W \setminus X$ }
    \STATE{ $R_w \gets \{ u \in U \mid \exists \text{ an } M
      \text{-alternating $u$-$w$-path in } G[U\dcup(W^M\cup\{w\})] \}$ } 
    \ENDFOR
    \STATE{ $\rset \gets W^M$ }
    \STATE{ $U_\uset \gets U$ }
    \WHILE{ $|U_\uset| > 0$ }
    \label{alg:card-v-rap-1.75-approximation-setcoverapproximation-while}
    \STATE{ $\bar w \gets \arg\max \{ |R_w \cap U_\uset| \colon w \in W
      \setminus \rset \} $  }
    \label{alg:card-v-rap-1.75-approximation-approximation-argmaxcomputation}
    \STATE{ $U_\uset \gets U_\uset \setminus R_{\bar w}$ }
    \label{alg:card-v-rap-1.75-approximation-approximation-updatecritcaljobs}
    \STATE{ $\rset \gets \rset \cup \{ \bar w \}$ }
    \label{alg:card-v-rap-1.75-approximation-approximation-updatesolution}
    \ENDWHILE
    \label{alg:card-v-rap-1.75-approximation-setcoverapproximation-endwhile}
    \RETURN{$\rset$}
  \end{algorithmic}
\end{algorithm}

Note that in the uniform \uvrpm\ case, one can start with an arbitrary
$U$-perfect matching $M$ since the cost of any such matching is $|U|$. 
Furthermore, all job nodes are matched to a vulnerable machine node, as
we assume the uniform case. Therefore, each job node is critical and
$U_{\uset}= U$. \\

To prove the quality of the computed solution $\rset$, we proceed as
follows. 
We distinguish two types of iterations of the set cover greedy
subroutine corresponding to
steps~\ref{alg:card-v-rap-1.75-approximation-setcoverapproximation-while}--\ref{alg:card-v-rap-1.75-approximation-setcoverapproximation-endwhile}
of Algorithm~\ref{alg:card-v-rap-1.75-approximation}.  
An iteration is called \emph{productive} if the cardinality of $U_\uset$, 
the set of not yet covered job nodes,
decreases by at least two in
step~\ref{alg:card-v-rap-1.75-approximation-approximation-updatecritcaljobs}
of this iteration. This means that adding the current machine node $\bar w$ computed in
step~\ref{alg:card-v-rap-1.75-approximation-approximation-argmaxcomputation}
to $\rset$ will saturate at least two critical job nodes.   
All other iterations are called \emph{nonproductive}.
\\
Let $p$ be the total decrease of $U_\uset$ obtained from all productive iterations, and let $\opt\subseteq W$ denote an optimal solution for
the given uniform \uvrpm\ instance.
We next prove two claims that we will use to derive the desired
approximation ratio.
\begin{itemize}
\item \textbf{Claim 1: } $|\rset| \leq 2 |U| - \tfrac{p}{2}$
   
  In step~\ref{alg:step:card-v-rap-initial-node-set}, $\rset$ is set
  to $W^M$, i.e.\ $|X|=|W^M|=|U|$.
  In the uniform case, every job node is critical. Thus,
  the ground set of the set cover instance is $U$.
  Since every productive iteration saturates at least two nodes, there can be
  at most $\tfrac{p}{2}$ productive iterations.
  In each such iteration, one further machine node is added to
  $\rset$. 
  All remaining iterations are nonproductive, and there are exactly $|U| - p$ of
  them. In total, we have 
  \begin{align*}
    |\rset|\leq |U| + \frac{p}{2} + (|U| - p) = 2 |U| - \frac{p}{2}. 
  \end{align*}
  
\item \textbf{Claim 2: } $|\opt| \geq \max\{|U|, 2(|U|-p)\}$

  \opt\ contains at least one $U$-perfect matching, i.e.\  $|\opt| \geq
  |U|$.
  
  Recall that there are $|U| - p$ nonproductive iterations, and that
  in each nonproductive iteration only one additional job node is covered. 
  We denote by $U^\prime$ all job nodes from $U$ being 
  covered in a nonproductive iteration of the set cover greedy subroutine. 
  Then, for any pair of distinct nodes $u_1,u_2\in U^\prime$, $u_1\neq
  u_2$, we observe that their neighborhoods in $G$ are disjoint 
  i.e.\ $\neighb_G(u_1) \cap \neighb_G(u_2) = \emptyset$.
  Indeed, if this were not the case, and $u_1$ and $u_2$ would have a common
  neighbor in $W$, it would be possible to cover them in a productive iteration, 
  contradicting $u_1$, $u_2 \in U^\prime$.
  Since we have a uniform instance, any node in $U$ (and hence in $U^\prime$) 
  must have at least two neighbors in any feasible solution, 
  including $\opt$. This gives us the bound $|\opt| \geq 2(|U| - p)$.
\end{itemize}
Finally, we derive an upper bound on $\frac{|\rset|}{|\opt|}$ corresponding 
to the approximation ratio.
Claim 2 allows us to focus on the following two cases.
\begin{itemize}
\item \textbf{Case 1: } $|\opt| \geq |U| \geq 2(|U|-p)$, i.e.\ $2p \geq |U|$
  
  From Claim 1 and $2p \geq |U|$, we obtain that
  \begin{align*}
    |\rset| \leq 2 |U| - \frac{p}{2} \leq 2|U| - \frac{|U|}{4} \leq
    \frac{7}{4}|U|\leq \frac{7}{4} |\opt| . 
  \end{align*}
  
\item \textbf{Case 2: } $|\opt| \geq 2(|U|-p) \geq |U|$, i.e.\ $2p \leq |U|$
  
  From Claim 1 and $2p \leq |U|$, we derive that
  \begin{align*}
    |\rset| \leq 2 |U| - \frac{p}{2}
    \leq \frac{7}{2} |U| - \frac{7}{2}p \leq \frac{7}{2} (|U|-p) \leq \frac{7}{4} |\opt| .
  \end{align*} 
\end{itemize}
In both cases we obtain $\frac{|X|}{|\opt|} \leq \frac{7}{4}$, which concludes the proof.
\end{proof}

For non-uniform instances of \uvrpm\ the latter proof does not apply, 
since $|\opt| \geq 2(|U|-p)$ need not hold in general. However, we can
can still show that a factor $2$ is achieved.

\begin{corollary}
\label{cor:card-v-rap-admits-a-2-approximation}
Algorithm~\ref{alg:card-v-rap-1.75-approximation} has an approximation ratio of 2 for non-uniform instances of \uvrpm.
\end{corollary}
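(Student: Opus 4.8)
The plan is to adapt the proof of Theorem~\ref{thm:card-v-rap-admits-a-1.75-approximation}, keeping Claim 1 intact while replacing the lower bound on $|\opt|$ that no longer holds. Observe that Claim 1 is derived purely from the structure of Algorithm~\ref{alg:card-v-rap-1.75-approximation} (counting productive and nonproductive iterations) and makes no use of uniformity; hence the bound
\[
  |\rset| \leq 2|U| - \frac{p}{2}
\]
carries over verbatim to the non-uniform case. What breaks in the non-uniform setting is the second half of Claim 2, namely $|\opt| \geq 2(|U|-p)$: this relied on every job node in $U'$ having at least two neighbors in any feasible solution, which is only guaranteed when all machine nodes are vulnerable. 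In a non-uniform instance a job node matched to a non-vulnerable machine need not be covered a second time.

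First I would retain the universally valid half of Claim 2, that is $|\opt| \geq |U|$, which holds because any feasible \uvrpm\ solution contains a $U$-perfect matching and thus at least $|U|$ machine nodes. This single inequality is all that survives from Claim 2. Second, I would bound $p$ trivially by $p \leq |U|$, since $p$ is the total decrease of $U_\uset$ over productive iterations and $|U_\uset| = |U|$ initially. Combining these two facts with Claim 1 gives
\[
  \frac{|\rset|}{|\opt|} \leq \frac{2|U| - \tfrac{p}{2}}{|U|} \leq \frac{2|U|}{|U|} = 2,
\]
where the last estimate simply drops the favorable term $-\tfrac{p}{2} \leq 0$. This yields the claimed factor of $2$.

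The main subtlety, rather than an obstacle, is to confirm that the feasibility argument of Algorithm~\ref{alg:card-v-rap-1.75-approximation} and the validity of Claim 1 do not secretly depend on uniformity. Feasibility follows from Lemma~\ref{lem:feasible-solution-imply-solutions-to-set-cover}, which is stated for general \vrpm\ instances and is therefore unaffected. For Claim 1, I would note that in the non-uniform case the ground set of the set cover subroutine is the set $U_\uset$ of critical job nodes rather than all of $U$; since $U_\uset \subseteq U$ we have $|U_\uset| \leq |U|$, so the counting bound $|\rset| \leq |W^M| + (\text{productive iter.}) + (\text{nonproductive iter.}) \leq 2|U| - \tfrac{p}{2}$ remains valid (indeed it can only improve). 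Thus no step of the original argument is weakened, and the corollary follows immediately from Claim 1 together with the lone surviving inequality $|\opt| \geq |U|$.
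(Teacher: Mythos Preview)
Your proof is correct and matches the paper's own argument: both simply combine the surviving lower bound $|\opt|\geq |U|$ with Claim~1's upper bound $|\rset|\leq 2|U|-\tfrac{p}{2}$ and drop the nonpositive term $-\tfrac{p}{2}$. Your additional discussion of why Claim~1 does not rely on uniformity (and can only improve when $|U_\uset|<|U|$) is a worthwhile clarification that the paper omits, but the underlying approach is identical.
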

	
\begin{proof}
Replace the bounds in Claim 2 with the bounds $|\opt| \geq |U|$ and $|\rset| \leq 2|U| - \frac{p}{2}$.
Combining these two bounds yields the ratio of $2$.
\end{proof}
 
\subsection{Two vulnerable machines case}
\label{sec:cardvrap-twovulnerablemachinecase}
One interesting fact about the Edge-Robust Assignment Problem is that this problem is
even \np-hard in its simplest non-trivial variant, i.e.\ in case of two
vulnerable edges and unit weights (see
\cite{adjiashvili_bindewald_michaels_icalp2016}, and \cite[Section
5]{icalp_full_version_16} for a proof).  
In contrast, here we show that the node-robust assignment problem remains
tractable in this
setting. 
This result is summarized in the next theorem. 
\begin{theorem}
\label{thm:v-rap-with-two-scenarios-is-in-p}
\uvrpm\ with $|\uset|=2$ is solvable in polynomial time.
\end{theorem}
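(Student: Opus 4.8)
The plan is to prove that the optimal value of a feasible \uvrpm\ instance $(G,\uset,c)$ with $\uset=\{f_1,f_2\}$ is always either $|U|$ or $|U|+1$, and that we can decide which case occurs, together with an optimal solution, by computing a constant number of bipartite matchings. Throughout, recall that a feasible solution $\rset$ must, by definition, contain a $U$-perfect matching $M_1$ with $f_1\notin V(M_1)$ and a $U$-perfect matching $M_2$ with $f_2\notin V(M_2)$, and that every $U$-perfect matching saturates exactly $|U|$ machine nodes. Feasibility of the instance guarantees that both $G-\{f_1\}$ and $G-\{f_2\}$ admit $U$-perfect matchings.

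First I would establish the lower bound and the characterization of the value $|U|$. Since any feasible $\rset$ contains the $U$-perfect matching $M_1$, we have $|\rset|\ge |V(M_1)\cap W|=|U|$. If in addition $|\rset|=|U|$, then $\rset=V(M_1)\cap W=V(M_2)\cap W$; as $f_1\notin V(M_1)$ and $f_2\notin V(M_2)$, this forces $f_1,f_2\notin \rset$, so $M_1$ is in fact a $U$-perfect matching of $G-\{f_1,f_2\}$. Conversely, if $G-\{f_1,f_2\}$ has a $U$-perfect matching $M$, then $\rset:=V(M)\cap W$ is feasible (the same matching $M$ survives both scenarios) and has cardinality $|U|$. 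Thus the value $|U|$ is attainable if and only if $G-\{f_1,f_2\}$ contains a $U$-perfect matching, a condition checkable in polynomial time by one maximum-matching computation; in that case we output the machine nodes of such a matching.

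It remains to treat the case where $G-\{f_1,f_2\}$ has no $U$-perfect matching, for which I would show that the optimum equals $|U|+1$ by an explicit construction based on augmenting paths. Pick a $U$-perfect matching $M_1$ of $G-\{f_1\}$, which exists by feasibility. Since $G-\{f_1,f_2\}$ has no $U$-perfect matching, $M_1$ must use $f_2$; let $u^\ast$ be the job matched to $f_2$ by $M_1$. Deleting the edge $\{u^\ast,f_2\}$ yields a matching $M_1'$ of size $|U|-1$ in $G-\{f_2\}$ that exposes $u^\ast$. Because $G-\{f_2\}$ admits a $U$-perfect matching, $M_1'$ is not maximum there, so by Berge's theorem there is an $M_1'$-augmenting path $P$ in $G-\{f_2\}$ starting at $u^\ast$ and ending at a machine $w^\ast$ exposed by $M_1'$. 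Setting $M_2:=M_1'\,\triangle\,P$ gives a $U$-perfect matching of $G-\{f_2\}$, whose machine nodes are exactly $\big((V(M_1)\cap W)\setminus\{f_2\}\big)\cup\{w^\ast\}$. Every internal machine of $P$ is covered by $M_1'$ and hence lies in $V(M_1)\cap W$, so $w^\ast\notin V(M_1)$ and $w^\ast\neq f_2$. Outputting $\rset:=(V(M_1)\cap W)\cup\{w^\ast\}$, we see that $M_1\subseteq G[U\dcup\rset]$ handles scenario $f_1$ and $M_2\subseteq G[U\dcup\rset]$ handles scenario $f_2$, while $|\rset|=|U|+1$.

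Combining the two parts, the optimum equals $|U|$ when $G-\{f_1,f_2\}$ has a $U$-perfect matching and $|U|+1$ otherwise, matching the lower bound in each case, so the returned $\rset$ is optimal. All ingredients — a maximum bipartite matching and a single augmenting-path search — run in polynomial time (see \cite{lovasz_plummer_book_86}). The step I expect to require the most care is the construction in the last case: one must argue both that $M_1$ is forced to saturate $f_2$ and that the augmenting path can be found inside $G-\{f_2\}$, so that $M_2$ genuinely avoids $f_2$ while reusing all but one of $M_1$'s machines; this is exactly where the feasibility hypothesis (a $U$-perfect matching in $G-\{f_2\}$) and Berge's augmenting-path theorem enter.
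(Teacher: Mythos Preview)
Your proof is correct and takes a genuinely different, more elementary route than the paper's.

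The paper first argues that an optimal solution either contains both vulnerable nodes or neither. In the ``neither'' case it reduces to finding a $U$-perfect matching in $G-\{w',w''\}$, just as you do. For the ``both'' case, however, the paper builds an auxiliary multigraph: it adds a dummy job $d$ adjacent to $w'$ and $w''$, doubles all edges not incident to a vulnerable node, and solves an integer program requiring degree exactly two at every job node. Because the constraint matrix is totally unimodular, this \ilp\ is solvable by \lp\ methods; its solutions decompose into a $w'$--$w''$ path through $d$ together with cycles and paths covering the remaining jobs, and a post-processing step trims one machine from each odd component to obtain an inclusion-wise minimal feasible $\rset$.

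Your argument sidesteps all of this by proving directly that the optimum is either $|U|$ or $|U|+1$: if $G-\{f_1,f_2\}$ has a $U$-perfect matching you are done, and otherwise a single Berge augmenting path from the unique exposed job of $M_1\setminus\{\{u^\ast,f_2\}\}$ in $G-\{f_2\}$ produces a second matching $M_2$ that reuses all but one of $M_1$'s machines. This yields a sharper structural statement (the optimal value is determined by one matching test), avoids the \lp/TU machinery and the post-processing, and runs with one matching computation plus one augmenting-path search rather than an \lp\ solve on a doubled graph. The paper's formulation, on the other hand, makes the connection to degree-constrained subgraphs explicit, which could be a useful viewpoint if one wanted to extend the argument to other cost structures or constraints.
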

\begin{proof}
Let $\inst = (G,\uset)$ be a \uvrpm\ instance with $G:=(U \dcup W,E)$
and $\uset=\{w^\prime,w^{\prime\prime}\}\subseteq W$, $w^\prime\neq 
w^{\prime\prime}$.  
Given an optimal solution $X$ to $\inst$, observe first that either both $w^\prime$ and $w^{\prime\prime}$ are
contained in $X$ or none of them. In the latter case,
an optimal solution is given by an $U$-perfect matching in
$G-\{w^\prime,w^{\prime\prime}\}$.  One can use standard bipartite
matching algorithms to verify the existence of such a matching and, if
existent, corresponds to an optimal solution.
\\
In the remaining case $\uset$ is part of any optimal solution.
We introduce a dummy job node $d$ and the edges
$e^\prime:=\{d,w^\prime\}$ and
$e^{\prime\prime}:=\{d,w^{\prime\prime}\}$. 
Additionally we double every edge that is not incident to one of the
vulnerable nodes $w^\prime$ and $w^{\prime\prime}$. 
This gives us a new graph $G^\prime:=(U^\prime \dcup W^\prime, E^\prime)$ with $U^\prime := U
\cup \{d\}$, $W^\prime= W$ and $E^\prime:= E \cup
\{e^\prime,e^{\prime\prime}\} \cup \{\bar e \mid e \in E \setminus
\delta_G(\{w^\prime,w^{\prime\prime}\})\}$
(see Figure~\ref{fig:v-rap-modified-graph-for-two-vulnerable-machines-case},
for an illustration).  
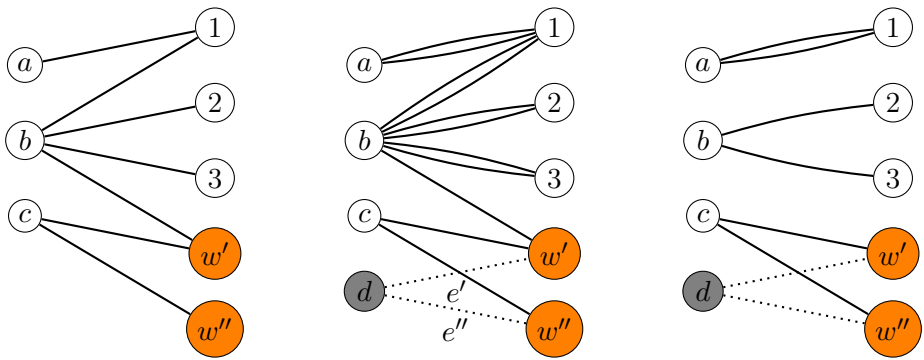
\begin{figure}[htb]
  \centering
  \begin{minipage}{0.2\linewidth}	
    \begin{tikzpicture}[inner sep = 2pt,every node/.style={circle, draw}]
      \def\jx{0}		
      \def\mx{2.5}		
      
      \node (a) at (\jx,1) [] {$a$};
      \node (b) at (\jx,0) [] {$b$};
      \node (c) at (\jx,-1) [] {$c$};
      
      \node (1) at (\mx,1.5) [] {1};
      \node (2) at (\mx,0.5) [] {2};
      \node (3) at (\mx,-0.5) [] {3};
      \node (w1) at (\mx,-1.5) [inner sep = 2pt, fill=orange] {$w^\prime$};
      \node (w2) at (\mx,-2.5) [inner sep = 2pt, fill=orange] {$w^{\prime \prime}$};
      
      \def\angle{0}		
      \path[=,thick, double] 	
      (a)		edge [bend right=\angle] (1)
      (b)		edge [bend right=\angle]  (1)
      edge [bend right=\angle] (2)
      edge [bend right=\angle] (3)
      edge (w1)
      (c)		edge (w1)
      edge (w2);
    \end{tikzpicture}
  \end{minipage}
  \hspace{0.1\linewidth}
  \begin{minipage}{0.2\linewidth}	
    \begin{tikzpicture}[inner sep = 2pt,every node/.style={circle, draw}]
      \def\jx{0}		
      \def\mx{2.5}		
      
      \node (a) at (\jx,1) [] {$a$};
      \node (b) at (\jx,0) [] {$b$};
      \node (c) at (\jx,-1) [] {$c$};
      \node (d) at (\jx,-2) [fill=gray] {$d$};
      
      \node (1) at (\mx,1.5) [] {1};
      \node (2) at (\mx,0.5) [] {2};
      \node (3) at (\mx,-0.5) [] {3};
      \node (w1) at (\mx,-1.5) [inner sep = 2pt, fill=orange] {$w^\prime$};
      \node (w2) at (\mx,-2.5) [inner sep = 2pt, fill=orange] {$w^{\prime \prime}$};
      
      \def\angle{5}		
      \path[=,thick, double] 	
      (a)		edge [bend right=\angle] (1)
      edge [bend left=\angle] (1)
      (b)		edge [bend right=\angle]  (1)
      edge [bend left=\angle] (1)
      edge [bend right=\angle] (2)
      edge [bend left=\angle] (2)
      edge [bend right=\angle] (3)
      edge [bend left=\angle] (3)
      edge (w1)
      (c)		edge (w1)
      edge (w2)
      (d)	    edge [dotted] node[draw=none, below, inner sep = 0.5pt] {$e^\prime$} (w1)
      edge [dotted] node[draw=none, below, inner sep = 0.5pt] {$e^{\prime \prime}$} (w2);
    \end{tikzpicture}
  \end{minipage}
  \hspace{0.1\linewidth}
  \begin{minipage}{0.2\linewidth}	
    \begin{tikzpicture}[inner sep = 2pt,every node/.style={circle, draw}]
      \def\jx{0}		
      \def\mx{2.5}		
      
      \node (a) at (\jx,1) [] {$a$};
      \node (b) at (\jx,0) [] {$b$};
      \node (c) at (\jx,-1) [] {$c$};
      \node (d) at (\jx,-2) [fill=gray] {$d$}; 
      
      \node (1) at (\mx,1.5) [] {1};
      \node (2) at (\mx,0.5) [] {2};
      \node (3) at (\mx,-0.5) [] {3};
      \node (w1) at (\mx,-1.5) [inner sep = 2pt, fill=orange] {$w^\prime$};
      \node (w2) at (\mx,-2.5) [inner sep = 2pt, fill=orange] {$w^{\prime \prime}$};
      
      \def\angle{5}		
      \path[=,thick, double] 	
      (a)		edge [bend right=\angle] (1)
      edge [bend left=\angle] (1)
      (b)	edge [bend left=\angle] (2)
      edge [bend right=\angle] (3)
      (c)		edge (w1)
      edge (w2)
      (d)	    edge [dotted] (w1)
      edge [dotted] (w2);
    \end{tikzpicture}
  \end{minipage}
  \caption{Input graph $G$ (left), graph
    $G^\prime$ resulting by introducing the node $d$ and the edges $e^\prime$ and $e^{\prime\prime}$ 
    (center) and a solution to the
    corresponding \ilp~\eqref{eq_ilp-cardvrap-with-two-scenarios} (right).}
  \label{fig:v-rap-modified-graph-for-two-vulnerable-machines-case}
\end{figure}
Note that the new graph $G^\prime$ remains bipartite.
\\
We will obtain an optimal solution for the \vrpm\ instance $\inst$ from the following \ilp.
\begin{align}
  \label{eq_ilp-cardvrap-with-two-scenarios}
  \begin{split}
    \min~~ & \sum_{e \in E^\prime} x_e \\
    \text{s.t.}~~ & x(\delta(u)) = 2,\; \text{ for each } u \in U^{\prime}, \\
    & x_{e^\prime} = x_{e^{\prime\prime}} = 1, \\
    & x \in \{0,1\}^{E^\prime}.
  \end{split}
\end{align}
Every solution to
\ilp~\eqref{eq_ilp-cardvrap-with-two-scenarios} forms a
collection of cycles of size greater than or equal to two that
cover every node from $U^{\prime}$. Cycles of size two are
formed by parallel edges in $G^\prime$, i.e.\ each such
cycle represents an original edge from $G$.
The cycle covering node $d$ contains the newly introduced edges $e^{\prime}$ and
$e^{\prime\prime}$, and has a size of at least four. This cycle
corresponds to a path from $w^\prime$ to $w^{\prime\prime}$ in $G$
with an even number of edges.
Thus, every solution to
\ilp~\eqref{eq_ilp-cardvrap-with-two-scenarios} defines a union
of a $w^\prime$-$w^{\prime\prime}$-path, a (possibly empty)
matching, some additional even paths, and potentially some further cycles in the original graph $G$.
As $G^\prime$ is bipartite, the constraint matrix of
\ilp~\eqref{eq_ilp-cardvrap-with-two-scenarios} is totally unimodular
(see~\cite{schrijver2002combinatorial} for background on totally unimodular matrices)
implying that \ilp~\eqref{eq_ilp-cardvrap-with-two-scenarios} can be solved in
polynomial time via \lp\ methods. 
\\
Next, we claim that solutions to \ilp~\eqref{eq_ilp-cardvrap-with-two-scenarios} 
correspond to inclusion-wise minimal solutions to the \vrpm\ instance $\inst$ and
vice versa.
For a given feasible solution $x$ of
\ilp~\eqref{eq_ilp-cardvrap-with-two-scenarios}, consider
\begin{align*}
  \rset := \{ w \in W \mid \exists e \in \delta(w)\; \textnormal{ with }\;
  x_e = 1 \}.
\end{align*} 
We argue that $\rset$ is feasible to the original \uvrpm\ instance
$\inst$. To see this, note that each original job node $u$ is adjacent
to at least one non-vulnerable machine node from $X$. 
Each node $u\in U$ located on a cycle in
$G^\prime$ that is induced by $x$ and does not contain a vulnerable node can be easily matched in 
$G[U\dcup X]$ as $u$ is either incident to an isolated edge in
$G[U\dcup X]$ or is part of an even cycle.
Even paths in $G^\prime$ correspond to evens paths in $G$.
Since job nodes are inner nodes of these paths, they can be matched using edges from these paths.
All remaining nodes from $U$ are inner nodes of the even
$w^\prime$-$w^{\prime\prime}$-path induced by $x$, i.e.\ we can 
find a $U$-perfect matching in $G[U\dcup \rset]$ not
using $w^\prime$ and $w^{\prime\prime}$, simultaneously. This shows
that any feasible solution $x$ of
\ilp~\eqref{eq_ilp-cardvrap-with-two-scenarios} corresponds to a
solution $\rset$ feasible to $\inst$.\\
Now, let $\rset$ be any feasible solution to $\inst$. Then, $G[U\dcup
\rset]$ contains a matching $M^\prime$ with $w^\prime\notin V(M^\prime)$ and a
matching $M^{\prime\prime}$ with $w^{\prime\prime}\notin
V(M^{\prime\prime})$. By our assumption that
$\{w^\prime,w^{\prime\prime}\}$ must be contained in every feasible
solution for $\inst$, we have that $w^{\prime\prime}\in V(M^{\prime})$
and $w^{\prime}\in V(M^{\prime\prime})$. 
Then, the symmetric difference $M^\prime\Delta M^{\prime\prime}$ contains an even path $P$ from $w^\prime$ to $w^{\prime\prime}$.
Moreover, $\hat M := M^\prime \setminus E(P)$ is a possibly empty matching on the job nodes not covered by the path $P$.
Consider $x\in\R^{E^\prime}$ with
\begin{align*}
  x_e = \left\{
  \begin{array}{rl}
    1, & \textnormal{ if } e\in\{e^\prime,e^{\prime\prime}\}\cup
         E(P),\\
    1, & \textnormal{ if } e\in \hat M \cup \{\bar e\mid e\in \hat
         M\},\\ 
    0, & \textnormal{ otherwise.}
  \end{array}
  \right.
\end{align*}

By construction the vector $x$ satisfies the constraints of \ilp~\eqref{eq_ilp-cardvrap-with-two-scenarios}.

Finally, solutions of \ilp~\eqref{eq_ilp-cardvrap-with-two-scenarios} can be used to obtain optimal solutions to $\inst$ as follows.
In the graph $G[U \dcup \rset]$ a job node can be adjacent to two non-vulnerable machines (see Figure~\ref{fig:v-rap-modified-graph-for-two-vulnerable-machines-case}).
Such solutions are not optimal and can be identified by checking if
$G[U \dcup \rset]$ has any odd component not containing $w^\prime$ and
$w^{\prime \prime}$.  
In each of these components the number of machine nodes exceeds the number of job nodes by one.
Now by removing an arbitrary machine node from each component we obtain an optimal solution to $\inst$.
\end{proof}

\bigskip
\paragraph{Acknowledgement. }
The work of the second and the third author is part of the Research
Training Group \emph{``Discrete Optimization of Technical Systems
  under Uncertainty''}.
Financial support by the German Research Foundation (DFG) is
gratefully acknowledged through RTG 1855.
Parts of this work were carried out during the second author was visiting IFOR at ETH Zurich. The hospitality is highly appreciated.


\end{document}